\NeedsTeXFormat{LaTeX2e}
\documentclass{eptcs}

\usepackage{amsmath}
\usepackage{amsthm}
\usepackage{amssymb}
\usepackage[mathscr]{eucal}

\usepackage{paralist}  

\usepackage{rotating}
\usepackage{color}
\usepackage{graphicx}
\usepackage{subfig}

\usepackage{cite}
\usepackage{xspace}
\usepackage{url}

\newcommand{\ignore}[1]{}

\newcommand{\fix}[1]{\textcolor{red}{\footnote{\textcolor{red}{{\bf FixMe!!} #1}}}}


\newcommand{\dbrackl}{[\hspace{-0.13em}[}
\newcommand{\dbrackr}{]\hspace{-0.13em}]}

\newcommand{\set}[1]{\{#1\}}

\newcommand{\setx}[2]{\{#1\mathbin{\,\mid\,}#2\}}
\newcommand{\Setx}[2]{\big\{#1\mathbin{\,\big\vert\,}#2\big\}}

\newcommand{\Nat}{\mathbb{N}}

\newcommand{\Bool}{\mathbf{B}}

\renewcommand{\inf}{\operatorname{inf}}
\newcommand{\size}[1]{|\!|#1|\!|}
\newcommand{\ind}[1]{\operatorname{ind}(#1)}
\newcommand{\release}{\operatorname{release}}

\newcommand{\ad}[1]{\operatorname{ad}(#1)}

\newcommand{\BDAG}{\mathbf{BDAG}}

\renewcommand{\phi}{\varphi}
\renewcommand{\rho}{\varrho}
\renewcommand{\epsilon}{\varepsilon}

\newcommand{\bigOh}{\mathcal{O}}

\def\mup#1#2{\relax\ifmmode{}^{#1}{\protect#2}\else{}^{#1}#2\fi}
\def\nup#1#2{\relax\ifmmode{}_{#1}{\protect#2}\else{}_{#1}#2\fi}

\makeatletter

\def\create@abbrev#1#2#3{
  \def\c@a@def##1{
      \if ##1.
        \relax
      \else
        \@ifdefinable{\@nameuse{#1##1}}{\@namedef{#1##1}{#2##1}}
        \expandafter\c@a@def
      \fi
    }
  \c@a@def #3.
}

\create@abbrev{bf}{\mathbf}{ABCDEFGHIJKLMNOPQRSTUVWXYZ}
\create@abbrev{cal}{\mathcal}{ABCDEFGHIJKLMNOPQRSTUVWXYZ}
\create@abbrev{scr}{\mathscr}{ABCDEFGHIJKLMNOPQRSTUVWXYZ}
\create@abbrev{frak}{\mathfrak}{ABCDEFGHIJKLMNOPQRSTUVWXYZ}
\create@abbrev{bb}{\mathbb}{ABCDEFGHIJKLMNOPQRSTUVWXYZ}
\create@abbrev{tt}{\mathtt}{ABCDEFGHIJKLMNOPQRSTUVWXYZabcdefghijklmnopqrstuvwxyz}
\create@abbrev{rm}{\mathrm}{ABCDEFGHIJKLMNOPQRSTUVWXYZabcdefghijklmnopqrstuvwxyz}
\create@abbrev{sf}{\mathsf}{ABCDEFGHIJKLMNOPQRSTUVWXYZabcdefghijklmnopqrstuvwxyz}

\create@abbrev{ol}{\overline}{ABCDEFGHIJKLMNOPQRSTUVWXYZ}

\create@abbrev{aut}{\mathscr}{ABCDEFGHIJKLMNOPQRSTUVWXYZ}

\makeatother


\newcommand{\sem}[1]{\mathopen{\dbrackl}#1\mathclose{\dbrackr}}

\newcommand{\propset}{\mathscr{P}}
\newcommand{\varset}{\mathscr{V}}
\newcommand{\TRUE}{\ensuremath{\mathsf{tt}}}
\newcommand{\FALSE}{\ensuremath{\mathsf{ff}}}

\newcommand{\DIAMOND}[1]{\operatorname{{\mathopen{\langle}}\mathit{#1}{\mathclose{\rangle}}}}
\newcommand{\BOX}[1]{\operatorname{\mathopen{[}\mathit{#1}\mathclose{]}}}

\newcommand{\mucalc}{\ensuremath{\mathcal{L}_\mu}\xspace}

\renewcommand{\fix}[1]{}

\renewcommand{\bigOh}{\mathscr{O}}
\renewcommand{\propset}{\mathcal{P}}
\renewcommand{\varset}{\mathcal{V}}
\renewcommand{\Diamond}{\operatorname{\text{\raisebox{-.1em}{\begin{turn}{45}\scalebox{.9}{$\Box$}\end{turn}}}}}

\newtheorem{theorem}{Theorem}[section]
\newtheorem{lemma}[theorem]{Lemma}

\newtheorem{corollary}[theorem]{Corollary}
\newtheorem{definition}[theorem]{Definition}

\title{The $\mu$-Calculus Alternation Hierarchy Collapses \\ over
  Structures with Restricted Connectivity}

\author{Julian Gutierrez
  \institute{University of Cambridge, United Kingdom}
  \and 
  Felix Klaedtke
  \institute{ETH Zurich, Switzerland} 
  \and 
  Martin Lange
  \institute{University of Kassel, Germany}}

\begin{document}

\maketitle

\begin{abstract}
  It is known that the alternation hierarchy of least and greatest
  fixpoint operators in the \mbox{$\mu$-calculus} is strict.  However,
  the strictness of the alternation hierarchy does not necessarily
  carry over when considering restricted classes of structures.  A
  prominent instance is the class of infinite words over which the
  alternation-free fragment is already as expressive as the full
  \mbox{$\mu$-calculus.}  Our current understanding of when and why
  the \mbox{$\mu$-calculus} alternation hierarchy is not strict is
  limited.  This paper makes progress in answering these questions by
  showing that the alternation hierarchy of the \mbox{$\mu$-calculus}
  collapses to the alternation-free fragment over some classes of
  structures, including infinite nested words and finite graphs with
  feedback vertex sets of a bounded size.  Common to these
  classes is that the connectivity between the components in a
  structure from such a class is restricted in the sense that the
  removal of certain vertices from the structure's graph decomposes it
  into graphs in which all paths are of finite length.  Our collapse
  results are obtained in an automata-theoretic setting.  They
  subsume, generalize, and strengthen several prior results on the
  expressivity of the \mbox{$\mu$-calculus} over restricted classes of
  structures.
\end{abstract}

\setlength{\abovedisplayskip}{8pt plus 3pt minus 5pt}
\setlength{\belowdisplayskip}{8pt plus 3pt minus 5pt}

\section{Introduction}

The $\mu$-calculus~\cite{Kozen1983:mucalculus}, hereafter \mucalc,
extends modal logic with least and greatest fixpoint operators, which
act as monadic second-order (MSO) quantifiers within the logic.
The possibility to arbitrarily mix and nest fixpoint operators makes
\mucalc an expressive logic, which subsumes many dynamic, temporal,
and description logics such as PDL and CTL\textsuperscript{*}.  In
fact, \mucalc is essentially the most expressive logic of that kind as
it can express, up to bisimulation equivalence, all MSO-definable
properties~\cite{Janin_Walukiewicz1996:mso_mucalulus}.

An important question about the expressivity of \mucalc is whether
more alternation---the nesting of mutually dependent least and greatest
fixpoint operators in formulas---gives more expressive power.
Bradfield~\cite{Bradfield1998:mucalculus_strict} proved that indeed
this is in general the case, i.e., there is a hierarchy of properties
that require unbounded alternation of least and greatest fixpoint
operators.  Lenzi~\cite{ICALP::Lenzi1996} independently showed a
similar strictness result---for a fragment of \mucalc---with respect
to an alternation hierarchy different from the one we consider in this
paper.  In both cases, their strictness results apply to the class of
finite directed graphs and therefore to all bigger classes of
structures.  However, the strictness of the alternation hierarchy need
not necessarily carry over when considering classes of structures that
are either incomparable to or smaller than the class of finite
directed graphs.  Trivial examples over which the alternation
hierarchy is non-strict are classes that only consist of a single
graph.  Here, each formula is equivalent to either \emph{true} or
\emph{false}, depending on whether the graph satisfies the formula or
not.

Overall, little is known about the expressivity of \mucalc over
restricted classes of structures.
Since \mucalc is bisimulation-invariant and every finite graph, either
directed or undirected, is bisimilar to a possibly infinite tree, the
strictness of the hierarchy also holds for the class of trees.  In
fact, as shown by Arnold~\cite{RAIRO:arnold99} and
Bradfield~\cite{Bradfield99-binarytree}, the hierarchy is strict even
on the class of binary infinite trees.
Alberucci and Facchini~\cite{Alberucci_Facchini2009:transitive_reflexive}
also strengthened the initial strictness result by showing that the hierarchy
remains strict over the class of reflexive finite directed graphs.

On the opposite side, there are a few classes of structures over which
it is known that the alternation hierarchy is not strict.
For instance, the hierarchy collapses to its alternation-free fragment
over the class of finite directed acyclic
graphs~\cite{Mateescu:2002:LMC}.  That is, for every \mucalc formula
$\phi$, there is an alternation-free \mucalc formula $\psi$, i.e., one
in which least and greatest fixpoint operators do not mutually depend
on each other, such that $\phi$ and $\psi$ are satisfied by exactly
the same set of finite acyclic graphs.  This collapse result is not
too surprising since 
the denotation of the least and greatest fixpoint operators of \mucalc 
differs only when considering models which contain infinite paths---and 
finite directed acyclic graphs only contain finite paths. 
Thus, in
this case, every greatest fixpoint operator can be replaced by a least
one, resulting in an alternation-free formula.
It is also known, when restricting \mucalc to infinite words,
that the \mucalc's alternation hierarchy collapses to its alternation-free
fragment~\cite{Kaivola1995:multl}. Moreover, over infinite nested
words, as shown by Arenas
et~al.~\cite{Arenas_Barcelo_Libkin2011:nestedwords}, the alternation
hierarchy collapses to the fragment with at most one alternation
between least and greatest fixpoint operators.
Finally, it is known that \mucalc's alternation hierarchy collapses
over the class of transitive finite directed
graphs~\cite{Alberucci_Facchini2009:transitive_reflexive,Dawar_Otto2009:modal_characterizations,DAgostino_Lenzi2010:transitive}.
If the graphs are transitive and undirected, then the hierarchy even
collapses to the modal
fragment~\cite{Alberucci_Facchini2009:transitive_reflexive,Dawar_Otto2009:modal_characterizations}.

This paper provides further classes of structures over which the
alternation hierarchy of \mucalc collapses to its alternation-free
fragment.  In fact, our collapse results subsume, generalize, and
strengthen some of the collapse results mentioned above.
In particular, we show that the alternation hierarchy collapses over
classes of finite directed graphs with feedback vertex sets of
a bounded size.  Recall that removing the vertices in a feedback vertex
set decomposes the graph into finite directed acyclic graphs and thus
the removal of these vertices eliminates the infinite behavior in the
original graph.  Finite directed acyclic graphs have the empty set as
feedback vertex set.
We also show that, as for infinite words, all \mucalc properties of
infinite nested words can already be expressed within the
alternation-free fragment.  Our collapse results are obtained in a
uniform way by looking at bounded classes of so-called bottlenecked
directed acyclic graphs.  The vertices of such a kind of graphs are grouped
into layers and the infinite paths must visit infinitely often
vertices in certain layers, which are bounded in their
size. Intuitively speaking, these bounded layers are the bottlenecks
and the removal of these vertices disconnects the graph into graphs in
which all paths have finite length.  Nested words and the unfoldings
of finite directed graphs with bounded feedback vertex sets are
special instances of such graphs.

Our work is carried out in an automata-theoretic setting.  Roughly
speaking, the question of whether the alternation hierarchy collapses
to the alternation-free fragment over a class of structures~$\bfU$ can
be answered positively by showing that alternating parity automata are
as expressive as weak alternating automata over $\bfU$.  Translations
between automata and \mucalc formulas are known,
e.g.,~\cite{Niwinski1988:fixedpoints,Emerson_Jutla:focs1991,Kupferman_Vardi2005:linear_branching,Wilke:2001}.
Yet, the translation from weak alternating automata to
alternation-free formulas we provide here is more direct than the
known ones in the sense that it avoids the construction of formulas in
\mbox{vectorial form, cf.~\cite{an01}.}

Another technical contribution of this paper is a generalization of
the ranking construction developed by Kupferman and
Vardi~\cite{Kupferman_Vardi2001:weak}, which can be used to translate
alternating coB\"uchi word automata into language-equivalent weak
alternating word automata. We generalize it to the parity acceptance
condition and to more complex classes of structures, namely, to bounded
bottlenecked graphs.
Kupferman and Vardi~\cite{Kupferman_Vardi1998:weak_tree} have already
generalized their ranking construction for word automata and applied
it to solve the nonemptiness problem for nondeterministic parity tree
automata. However, our generalization of their ranking
construction~\cite{Kupferman_Vardi2001:weak} is conceptually simpler:
It eliminates the odd colors of a parity automaton in a single
construction step. An additional step is needed to obtain from the
resulting B\"uchi automaton a weak automaton.  In contrast, Kupferman
and Vardi's generalization~\cite{Kupferman_Vardi1998:weak_tree}
successively eliminates the colors, alternating between odd and even
colors. The acceptance conditions of the intermediate word automata
are a combination of a parity acceptance condition and a \mbox{B\"uchi or
coB\"uchi acceptance condition.}

We proceed as follows.  Preliminaries on \mucalc and alternating
automata are given in Section~\ref{sec:preliminaries}.  Translations
between \mucalc\!formulas and automata appear in
Section~\ref{sec:translations}.  Section~\ref{sec:constructions}
presents our generalization of the ranking construction.
Section~\ref{sec:bounded_connectivity} contains our collapse results.
Finally, in Section~\ref{sec:conclusion}, we draw conclusions and
outline directions for future work.  
Due to space limitations some of the proof details have been
omitted. They can be found in the full version of the paper, which is
\mbox{available from the authors' web pages.}


\section{Preliminaries}
\label{sec:preliminaries}

In this section, we provide notation and terminology that we use
throughout the paper.

\subsection{The $\mu$-Calculus}
\label{subsec:calculus}

\paragraph{Graphs}

Let $A$ be a nonempty finite set, whose elements are called
\emph{actions}, and let $\Sigma$ be an alphabet.  A
\emph{$(\Sigma,A)$-graph} is a directed, labeled, and pointed graph
$\big(V,(E_a)_{a\in A},v_{\rmI},\lambda\big)$, where $V$ is a set of
vertices, $E_a\subseteq V\times V$ is a set of edges labeled by $a\in
A$, $v_\rmI\in V$ is the source, and $\lambda:V\rightarrow\Sigma$ a
labeling function.  We require in the following that $V$ is at most
countable.

\paragraph{Syntax and Semantics}

We define the $\mu$-calculus, \mucalc for short, over
$(2^\propset,A)$-graphs, where $\propset$ is a nonempty set of
propositions.  Let $\varset=\set{X,Y,\dots}$ be a countable set of
variables.  The syntax of \mucalc is given by the grammar
\begin{equation*}
  \phi\mathbin{::=}
  X \mathbin{\big|} p \mathbin{\big|}  \neg p\mathbin{\big|} 
  \phi\wedge\phi \mathbin{\big|} \phi\vee\phi \mathbin{\big|} 
  \BOX{a}\phi \mathbin{\big|}  \DIAMOND{a}\phi \mathbin{\big|} 
  \mu X.\,\phi \mathbin{\big|} \nu X.\,\phi
  \,,
\end{equation*}
where $X$ ranges over $\varset$, $p$ over $\propset$, and $a$ over
$A$.
The semantics of \mucalc is as follows.  Let $\scrG=\big(V,(E_a)_{a\in
  A},v_{\rmI},\lambda\big)$ be a $(2^{\propset},A)$-graph.  A
valuation $\sigma$ assigns each variable in $\varset$ to a set of
vertices.  For $X\in\varset$ and $U\subseteq V$, we write
$\sigma[X\mapsto U]$ if we alter $\sigma$ at $X$, i.e.,
$\sigma[X\mapsto U](Y):=U$ if $Y=X$ and $\sigma[X\mapsto
U](Y):=\sigma(Y)$, otherwise.  The set $\sem{\phi}^\scrG_\sigma$ of
vertices in $\scrG$ that satisfy $\phi$ under $\sigma$ is
\mbox{defined as follows:}
\begin{align*}
   \sem{X}^\scrG_\sigma &:= \sigma(X) 
   \\
   \sem{p}^\scrG_\sigma &:= \Setx{v\in V}{p\in \lambda(v)}
   \\
   \sem{\neg p}^\scrG_\sigma &:= \Setx{v\in V}{p\not\in \lambda(v)}
   \\
   \sem{\phi\wedge\psi}^\scrG_\sigma &:= 
   \sem{\phi}^\scrG_\sigma \cap \sem{\psi}^\scrG_\sigma
   \\
   \sem{\phi\vee\psi}^\scrG_\sigma &:= 
   \sem{\phi}^\scrG_\sigma \cup \sem{\psi}^\scrG_\sigma
   \\
   \sem{\BOX{a}\phi}^\scrG_\sigma &:=
   \Setx{v\in V}{\text{if }(v,v')\in E_a\text{ then }v'\in\sem{\phi}^\scrG_\sigma,
     \text{ for all }v'\in V}
   \\
   \sem{\DIAMOND{a}\phi}^\scrG_\sigma &:=
   \Setx{v\in V}{(v,v')\in E_a\text{ and }v'\in\sem{\phi}^\scrG_\sigma,
     \text{ for some }v'\in V}
   \\
   \sem{\mu X.\,\phi}^\scrG_\sigma &:=
   \bigcap\Setx{U\in 2^V}{\sem{\phi}^\scrG_{\sigma[X\mapsto U]}\subseteq U}
   \\
   \sem{\nu X.\,\phi}^\scrG_\sigma &:=
   \bigcup\Setx{U\in 2^V}{\sem{\phi}^\scrG_{\sigma[X\mapsto U]}\supseteq U}
\end{align*}

The grammar of \mucalc guarantees that formulas are in negation normal
form, i.e., negations only occur directly in front of propositions in
$\propset$. This syntactic feature ensures monotonicity and thus
existence of the least and greatest fixpoints expressed by $\mu$ and
$\nu$, respectively.

The \emph{size} of $\phi$, written $|\phi|$, is
its number of syntactically distinct subformulas.
A formula $\phi$ is a \emph{sentence} iff $\phi$ does not have free
variables. In this case, $\sem{\phi}^\scrG_\sigma$ does not depend on 
$\sigma$. For a sentence $\phi$ and a set of
$(2^{\propset},A)$-graphs $\bfU$, we define
\begin{equation*}
  L_\bfU(\phi):=
  \Setx{\scrG\in\bfU}{v_{\rmI}\in\sem{\phi}^\scrG_\sigma,\text{
      with $v_{\rmI}$ the source of $\scrG$ and $\sigma$ some valuation}}
  \,.
\end{equation*}

\paragraph{Alternation Hierarchy}

\mucalc formulas determine an infinitely large hierarchy, which relies
on the mutual interdependencies between least and greatest fixpoint
operators. To define this hierarchy, we follow
Niwi{\'n}ski\cite{Niwinski1986:mucalculus}:
\begin{itemize}[--]
\item $\Sigma_0 = \Pi_0$ is the set of formulas without fixpoint
  operators, i.e., modal logic formulas.
\item For $n\geq0$, $\Sigma_{n+1}$ is the smallest set that contains
  the formulas in $\Sigma_n\cup\Pi_n$ and is closed under the
  following rules: (i)~if $\phi,\psi\in\Sigma_{n+1}$ then
  $\phi\wedge\psi\in\Sigma_{n+1}$ and $\phi\vee\psi\in\Sigma_{n+1}$;
  (ii)~if $\phi\in\Sigma_{n+1}$ and $a\in A$ then
  $\BOX{a}\phi\in\Sigma_{n+1}$ and $\DIAMOND{a}\phi\in\Sigma_{n+1}$;
  (iii)~if $\phi\in\Sigma_{n+1}$ and $X\in\varset$ then $\mu
  X.\,\phi\in\Sigma_{n+1}$; (iv)~if $\phi,\psi\in\Sigma_{n+1}$ and
  $X\in\varset$ then $\phi[\psi/X]\in\Sigma_{n+1}$ provided that no free
  variable of $\psi$ gets bound by a fixpoint operator in $\phi$ and
  where $\phi[\psi/X]$ denotes the formula obtained from substituting
  the free occurrences of $X$ by $\psi$ in $\phi$.
\item For $n\geq 0$, $\Pi_{n+1}$ is analogously defined as
  $\Sigma_{n+1}$: instead of closure under the least fixpoint operator
  $\mu$, we require closure with respect to the greatest fixpoint
  operator $\nu$.
\item For $n\geq 0$, we also define $\Delta_n:=\Sigma_n \cap \Pi_n$.
\end{itemize}
The \emph{alternation depth} of $\phi$, denoted by $\ad{\phi}$, is the
smallest $n\geq 0$ such that $\phi \in \Delta_{n+1}$.  A formula
$\phi$ is \emph{alternation-free} iff $\ad{\phi} \leq 1$, i.e., it is
in $\Delta_2$.

We remark that there is no agreement in the literature how to define
the alternation hierarchy of \mucalc. For instance, Emerson and
Lei~\cite{Emerson_Lei:lics1986} define $\Sigma_n$ and $\Pi_n$ slightly
differently. The differences are insubstantial for our results.
Furthermore, we point out that our definition of the alternation depth
of a formula is purely based on the formula's syntax and not on the
property it describes.

\subsection{Alternating Automata}
\label{subsec:automata}

\paragraph{Propositional Logic}

We denote the set of \emph{positive Boolean formulas} over the
proposition set $\propset$ by $\Bool^+(\propset)$, i.e.,
$\Bool^+(\propset)$ consists of the formulas that are inductively
built from the Boolean constants $\TRUE$ and $\FALSE$, the
propositions in $\propset$, and the Boolean connectives $\vee$ and
$\wedge$.  For $\calM \subseteq \propset$ and $\phi \in
\Bool^+(\propset)$, write $\calM \models \phi$ iff $\phi$ holds when
assigning true to the propositions in $\calM$ and false to those in
$\propset \setminus \calM$.

\paragraph{Words and Trees}

We denote the set of finite words over the alphabet $\Sigma$ by
$\Sigma^*$, the set of infinite words over $\Sigma$ by
$\Sigma^\omega$, and the empty word by $\epsilon$.
For a word $w$, $w_i$ denotes the symbol of $w$ at position $(i+1)$.
Write $v\preceq w$ if $v$ is a prefix of $w$.

A ($\Sigma$-labeled) \emph{tree} is a function $t:T\rightarrow
\Sigma$, where $T\subseteq\Nat^*$ satisfies the following conditions:
(i)~$T$ is prefix-closed, i.e., $v\in T$ and $u\preceq v$ implies
$u\in T$, and (ii)~if $vi\in T$ and $i>0$ then $v(i-1)\in T$.
The elements in $T$ are called the \emph{nodes} of $t$ and the empty
word $\epsilon$ is called the \emph{root} of $t$.  A node $vi\in T$
with $i\in\Nat$ is called a \emph{child} of the node $v\in T$.
A \emph{branch} in $t$ is a word $\pi\in\Nat^*\cup\Nat^{\omega}$ such
that either $\pi\in T$ and $\pi$ does not have any children, or $\pi$
is infinite and every finite prefix of $\pi$ is in $T$.  We write
$\bar{t}(\pi)$ for the word $t(\epsilon) t(\pi_0) t(\pi_0\pi_1)\ldots
t(\pi_0\pi_1\dots\pi_{n-1})\in\Sigma^*$ if $\pi$ is a finite branch of
length $n$ and $t(\epsilon) t(\pi_0)
t(\pi_0\pi_1)\ldots\in\Sigma^{\omega}$ if $\pi$ is infinite.

\paragraph{Automata}

In the following, we define alternating automata where the inputs are
$(2^\propset,A)$-graphs, where $\propset$ is a nonempty finite set of
propositions and $A$ is a nonempty finite set of actions.  Such
automata are essentially alternating parity tree automata that operate
over the tree unfolding of the given input.  The classical automata
models for words and trees are special instances when encoding the
letters of an alphabet $\Sigma$ by subsets of propositions and by
viewing words and trees in a \mbox{straightforward way as
$(\Sigma,A)$-graphs.}

A \emph{parity $(\propset,A)$-automaton}, $(\propset,A)$-PA for short,
is a tuple $\autA=(Q,\delta,q_\rmI,\alpha)$, where $Q$ is a finite set
of states, $\delta: Q\rightarrow
\Bool^+\big(\propset\cup\bar{\propset}\cup (Q \times
\set{\Diamond,\Box}\times A)\big)$ is the transition function with
$\bar{\propset}:=\setx{\bar{p}}{p\in\propset}$, $q_\rmI \in Q$ is the
initial state, and $\alpha:Q\rightarrow\Nat$ determines the (parity)
acceptance condition.  Assume that
$\propset\cap\bar{\propset}=\emptyset$.  We refer to $\alpha(q)$ as
the \emph{color} of the state $q\in Q$.  The \emph{index} of $\autA$
is $\ind{\autA} := |\setx{\alpha(q)}{q \in Q}|$ and the \emph{size} of
$\autA$ is the number of syntactically distinct subformulas that occur
in the transitions, i.e., $\size{\autA}:=\big|\bigcup_{q\in
  Q}\Setx{\psi}{\psi\text{ is a subformula of }\delta(q)}\big|$.  In
the following, we assume that $|Q|\in\bigOh(\size{\autA})$, which
holds when, e.g., every state occurs in some transition of $\autA$.

Let $\autA=(Q,\delta,q_\rmI,\alpha)$ be a $(\propset,A)$-PA and
$\scrG=\big(V,(E_a)_{a\in A}, v_\rmI,\lambda\big)$ a
$(2^\propset,A)$-graph.  A \emph{run} of $\autA$ on $\scrG$ is a tree
$\rho: R \to V\times Q$ with some $R\subseteq \Nat^*$ such that
$\rho(\epsilon) = (v_\rmI,q_\rmI)$ and for each node $x \in R$ with
$\rho(x) = (v,p)$, there is a set $\calM\subseteq
Q\times\{\Diamond,\Box\}\times A$ such that
\begin{equation*}
  \Setx{q\in\propset}{q\in\lambda(v)}\cup
  \Setx{\bar{q}\in\bar{\propset}}{q\not\in\lambda(v)}\cup
  \calM \models\delta(p)
\end{equation*}
and the following conditions are satisfied:
\begin{enumerate}[(a)]
\item If $(q,\Diamond,a)\in \calM$, then there is a node $v'\in V$ with
  $(v,v')\in E_a$ such that there is a child $x'\in R$ of $x$ with
  $\rho(x') = (v',q)$.
\item If $(q,\Box,a)\in\calM$, then for all nodes $v'\in V$ with
  $(v,v')\in E_a$ there is a child $x'\in R$ of $x$ such that
  $\rho(x') = (v',q)$.
\end{enumerate}
Roughly speaking, $\autA$ starts in its initial state by scanning the
input graph from its source. The label $(v,p)$ of the node $x$ in the
run is the current configuration of $\autA$. That is, $\autA$ is
currently in the state $p$ and the read-only head is at the vertex $v$
of the input.  The transition $\delta(p)$ specifies with respect to
the labeling $\lambda(v)$ a constraint that has to be respected by the
automaton's successor states.  In particular, for a proposition
$(q,\Diamond, a)\in\calM$, the read-only head must move along some
$a$-labeled edge starting at $v$. Similarly, for $(q,\Box,a)\in\calM$,
a copy of the read-only head must move along every $a$-labeled edge
starting at vertex~$v$.

An infinite branch $\pi$ in a run $\rho$ with $\bar{\rho}(\pi) =
(v_0,q_0) (v_1,q_1) \ldots$ is \emph{accepting} iff
$\max\setx{\alpha(q)}{q\in \inf(q_0q_1 \ldots)}$ is even, where
$\inf(q_0q_1\dots)$ denotes the set of states that occur infinitely
often in $q_0q_1\dots$.  The run $\rho$ is \emph{accepting} iff every
infinite branch in $\rho$ is accepting.
The \emph{language} of $\autA$ with respect to a set $\bfU$ of
$(2^\propset,A)$-graphs is the set
\begin{equation*}
  L_\bfU(\autA) :=
  \setx{\scrG\in\bfU} 
  {\text{there is an accepting run of $\autA$ on $\scrG$}}
  \,.
\end{equation*}

By restricting the acceptance condition and the automaton's
transitions, we obtain the following automata classes. Let
$\autA=(Q,\delta,q_{\rmI},\alpha)$ be a $(\propset,A)$-PA.
\begin{itemize}[--]
\item $\autA$ is \emph{B\"uchi} iff $\setx{\alpha(q)}{q\in
    Q}\subseteq\{1,2\}$.
\item $\autA$ is \emph{coB\"uchi} iff $\setx{\alpha(q)}{q\in
    Q}\subseteq\{0,1\}$.
\item $\autA$ is \emph{weak} iff there is a partition $Q_0,\dots,Q_n$
  on $Q$, for some $n\geq 0$ such that for all $i\in\{0,\dots,n\}$,
  the following holds: (i)~All states in the component $Q_i$ have the
  same parity, i.e.,~${\alpha(q)} \equiv {{\alpha(q')}\!\!\mod{2}}$, for
  all $q,q'\in Q_i$.  (ii)~$\delta(q)\in
  \Bool^+\big(\propset\cup\bar{\propset}\cup
  \bigcup_{j\in\set{i,\dots,n}}(Q_j\times\set{\Diamond,\Box}\times
  A)\big)$, for all $q\in Q_i$.  That is, when reading a vertex label
  the automaton can stay in the current component $Q_i$ or go to
  components with higher indices.
\end{itemize}
We also call $\autA$ a $(\propset,A)$-BA, $(\propset,A)$-CA, and
$(\propset,A)$-WA when it is B\"uchi, coB\"uchi, and weak,
respectively.

Finally, dualizing an alternating automaton corresponds to
complementation~\cite{Muller1987}.  In our case, the \emph{dual
  automaton} of a $(\propset,A)$-PA $\autA = (Q,\delta,q_I,\alpha)$ is
defined as the $(\propset,A)$-PA
$\overline{\autA}:=(Q,\overline{\delta},q_I,\overline{\alpha})$, where
for each $q \in Q$, $\overline{\delta}(q):=\overline{\delta(q)}$ with
\begin{align*}
  \overline{\TRUE} &:= \FALSE &\qquad
  \overline{\FALSE} &:= \TRUE \\
  \overline{p} &:= \bar{p}, \text{ for } p \in \propset &\qquad
  \overline{\bar{p}} &:= p, \text{ for } \bar{p} \in \bar{\propset} \\
  \overline{(q,\Diamond,a)} &:= (q,\Box,a) &\qquad
  \overline{(q,\Box,a)} &:= (q,\Diamond,a) \\
  \overline{\beta \wedge \gamma} &:= \overline{\beta} \vee \overline{\gamma} &\qquad
  \overline{\beta \vee \gamma} &:= \overline{\beta} \wedge \overline{\gamma}
\end{align*}
and $\overline{\alpha}(q) := \alpha(q)+1$.
It is not too hard to show that the dual automaton accepts the
complement language, i.e., $L_{\bfU}(\overline{\autA}) =
\overline{L_{\bfU}(\autA)}$.  Furthermore, note that
$\overline{\autA}$ is weak if $\autA$ is weak.


\section{From the $\mu$-Calculus to Automata and Back}
\label{sec:translations}

Translations between \mucalc and automata are known for various
automaton models.  For the sake of completeness, we present in this
section such translations with respect to our automaton model from
Section~\ref{subsec:automata}.
In the remainder of the text, let $\propset$ and $A$ be nonempty
finite sets of propositions and actions, respectively.  Furthermore,
throughout this section, let $\bfU$ be a set of
$(2^{\propset},A)$-graphs

\subsection{From \mucalc to Parity Automata}

The following translation is similar to the one 
in~\cite{Wilke:2001}. However, since our automaton
model does not support $\epsilon$-transitions, we need to require for
the translation that formulas are guarded, i.e., variables occur under
the scope of a modal operator within their defining fixpoint
formulas. For a proof of the following lemma, see,
e.g.,~\cite{Wal:comkap}.
\begin{lemma}
  \label{lem:guarded}
  For every sentence $\phi$, there is a guarded sentence $\psi$ of
  size $2^{\bigOh(|\phi|)}$ such that $L_\bfU(\psi)=L_\bfU(\phi)$ and
  $\ad{\psi}=\ad{\phi}$.\footnote{We are not aware of polynomial
    translations into the guarded fragment.  The claimed polynomial
    upper bounds of translations found in the literature are flawed.
    Counterexamples are families of formulas like $\mu X_1.\ldots\mu
    X_n.\,\bigvee_{i=1}^n X_i \wedge \DIAMOND{a}\bigwedge_{i=1}^n
    X_i$, where $a$ is an action.  For the given translations, these
    formulas cause exponential blow-ups.}
\end{lemma}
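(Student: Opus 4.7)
The plan is to proceed by induction on the number of fixpoint binders in $\phi$, making them guarded one at a time from the innermost outward. Atomic formulas, Boolean combinations, and modalities without fixpoint binding are already guarded, which settles the base case.

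For the inductive step, pick an innermost fixpoint subformula $\sigma X.\,\chi$ of $\phi$ (with $\sigma\in\{\mu,\nu\}$), so $\chi$ is free of further binders and already guarded in every variable except possibly $X$ itself. The task is to replace $\sigma X.\,\chi$ by an equivalent $\sigma X.\,\chi'$ in which every occurrence of $X$ lies under some $\BOX{a}$ or $\DIAMOND{a}$. My proposal is to locate each \emph{unguarded} occurrence of $X$ in $\chi$ — an occurrence that lies beneath no modality on its root-to-leaf path — and to substitute $\FALSE$ for it if $\sigma=\mu$, and $\TRUE$ for it if $\sigma=\nu$, while \emph{keeping the outer binder} $\sigma X$ even when $X$ no longer appears in $\chi'$. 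Semantic correctness $\sigma X.\,\chi\equiv\sigma X.\,\chi'$ follows from Knaster--Tarski: the ordinal approximation of $\mu X.\,\chi$ starts at the bottom of the lattice, and substituting $\FALSE$ for the unguarded $X$'s coincides with what the zeroth iterate supplies; an induction on the ordinal approximants shows that the two iterations agree throughout. A more syntactic justification applies the identities $\mu X.\,(X\vee\beta)\equiv\mu X.\,\beta$ and $\nu X.\,(X\wedge\beta)\equiv\nu X.\,\beta$ after exposing the unguarded $X$'s at top level via Boolean distributivity (with $\alpha$-renaming to avoid capture by inner binders).

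Alternation-depth preservation is immediate from this description: the transformation neither introduces nor deletes a fixpoint binder, nor changes the $\mu/\nu$ type of any binder, so the $\Sigma_n,\Pi_n$ classification of $\phi$ transfers to $\psi$ verbatim, giving $\ad{\psi}=\ad{\phi}$. Retaining vacuous outer binders is crucial here; dropping $\sigma X$ once $X$ has vanished could spuriously decrease the syntactic alternation depth.

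The principal obstacle is the size bookkeeping. A direct substitutional reading of the procedure incurs no blow-up, but as the footnote to the lemma cautions, various polynomial bounds in the literature are flawed — families such as $\mu X_1.\ldots\mu X_n.\,\bigvee_{i=1}^n X_i\wedge\DIAMOND{a}\bigwedge_{i=1}^n X_i$ are known to trip them up — and only the exponential bound $2^{\bigOh(|\phi|)}$ is safely guaranteed, with the slack absorbing the normal-form manipulations that may be needed in a fully general implementation to expose unguarded occurrences at the top level and to compose substitution steps across $\bigOh(|\phi|)$ nested binders. The semantic and alternation parts of a rigorous proof are routine equational reasoning over monotone fixpoints along the lines of~\cite{Wal:comkap}; the delicate part is the size analysis.
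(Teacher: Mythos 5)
The paper gives no proof of this lemma (it defers to the cited literature), so I am judging your argument on its own. It has a fatal gap: your core step --- replace every occurrence of $X$ that is not under a modality by $\FALSE$ (for $\mu$) or $\TRUE$ (for $\nu$) --- is unsound whenever such an occurrence lies \emph{beneath an inner fixpoint binder}. Your setup quietly excludes this case: the claim that the body of the innermost binder is ``already guarded in every variable except possibly $X$'' is false (it may contain unguarded occurrences of \emph{outer} variables), and once you move to an outer binder $\sigma Z.\psi$, the unguarded occurrences of $Z$ may sit inside already-processed inner fixpoints, where neither of your justifications applies: Boolean distributivity cannot move an occurrence out from under a $\nu Y$, and the approximant induction only yields one inclusion ($G^\kappa \subseteq F^\kappa$), not equality. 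Concretely, take
$\phi \;=\; \mu Z.\,\bigl(\nu Y.(Z \vee \DIAMOND{b}Y) \vee p\bigr)$.
The displayed $Z$ is unguarded (no modality separates it from $\mu Z$), yet it is reached, via the regeneration of $Y$ through $\DIAMOND{b}Y$, at vertices \emph{other} than the one where $\mu Z$ was entered, so it is not a vacuous self-reference. On the two-vertex graph with a single edge $v_0 \xrightarrow{\,b\,} v_1$ and $p$ true only at $v_1$, one computes $\sem{\phi} = \{v_0,v_1\}$ (since $v_0$ has a $b$-path to the $Z$-vertex $v_1$), whereas your transform $\mu Z.\,\bigl(\nu Y.(\FALSE \vee \DIAMOND{b}Y) \vee p\bigr) \equiv \mu Z.\,\bigl(\nu Y.\DIAMOND{b}Y \vee p\bigr)$ denotes only $\{v_1\}$, because there is no infinite $b$-path. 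So the two formulas already differ on a finite word-like model.

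This also dissolves the tension in your size discussion. A pure substitution would be a \emph{linear} guarding procedure, flatly contradicting the footnote's assertion that no polynomial translation is known --- a strong hint that the substitution cannot be correct as stated. The standard (and correct) remedy is exactly the step you omit: before eliminating unguarded occurrences of the current variable $Z$, first \emph{unfold} each inner fixpoint $\sigma Y.\gamma \equiv \gamma[\sigma Y.\gamma/Y]$ that contains them; since $\gamma$ is already guarded in $Y$, all re-substituted copies of $\sigma Y.\gamma$ land under modalities, and the surviving unguarded $Z$'s are now at the top Boolean level, where the identities $\mu X.((X\vee\alpha)\wedge\beta)\equiv\mu X.(\alpha\wedge\beta)$ and $\nu X.((X\wedge\alpha)\vee\beta)\equiv\nu X.(\alpha\vee\beta)$ (or equivalently your constant substitution) do apply. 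It is this repeated unfolding across $\bigOh(|\phi|)$ binders that produces the $2^{\bigOh(|\phi|)}$ bound; your alternation-depth argument (keeping vacuous binders, never changing a $\mu$ into a $\nu$) survives unchanged, but the correctness and size parts of the proof must be rebuilt around the unfolding step.
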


From guarded formulas one easily obtains equivalent parity automata.
\begin{theorem}
  \label{thm:mucalc_APA}
  For every guarded sentence $\phi$, there is a $(\propset,A)$-PA
  $\autA_\phi$ with $|\phi|$ states and
  $L_{\bfU}(\autA_\phi)=L_\bfU(\phi)$.  Moreover,
  $\size{\autA_\phi}\in\bigOh(|\phi|)$ and $\ind{\autA_\phi} \le
  \ad{\phi}+1$.
\end{theorem}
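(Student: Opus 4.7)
The plan is to set $Q := \sub{\phi}$ with initial state $q_{\rmI} := \phi$. After $\alpha$-renaming I may assume that every bound variable $X$ of $\phi$ has a unique binder $\phi_X = \sigma_X X.\psi_X$ in $\sub{\phi}$ with $\sigma_X \in \set{\mu,\nu}$. The transition function is then defined by the clauses $\delta(p) := p$, $\delta(\neg p) := \bar p$, $\delta(\BOX{a}\psi) := (\psi,\Box,a)$, $\delta(\DIAMOND{a}\psi) := (\psi,\Diamond,a)$, $\delta(\psi_1 \wedge \psi_2) := \delta(\psi_1) \wedge \delta(\psi_2)$ and dually for $\vee$, $\delta(\sigma X.\psi) := \delta(\psi)$, and $\delta(X) := \delta(\phi_X)$. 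The last three rules are mutually recursive, but guardedness of $\phi$ makes the recursion well-founded: every cycle in the unfolding graph must go through an occurrence of some variable inside its binder, which by guardedness is shielded by a modal operator where the recursion halts. The atoms occurring in the resulting positive Boolean formulas $\delta(\psi)$ are of the form $(\chi,\Diamond,a)$ or $(\chi,\Box,a)$ with $\chi \in \sub{\phi}$, plus label literals, so sharing common subformulas across all transitions yields $\size{\autA_\phi} \in \bigOh(|\phi|)$, and clearly $|Q| = |\sub{\phi}| = |\phi|$.

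For the color function $\alpha$, I would linearize the bound variables of $\phi$ into a sequence $X_1, \ldots, X_k$ compatible with the subformula order on their binders (outer binders come first) and then assign to each $X_i$ a color $c_i$ whose parity is odd if $\sigma_{X_i} = \mu$ and even if $\sigma_{X_i} = \nu$, taking $c_i = c_{i+1}$ whenever $X_i$ and $X_{i+1}$ have the same type and do not contribute an alternation along the dependency chains of $\phi$, and incrementing $c$ by one otherwise. Set $\alpha(\psi) := c_X$ if $\psi \in \set{X, \phi_X}$ and $\alpha(\psi) := 0$ for the remaining states. The number of distinct colors used is then at most $\ad{\phi} + 1$, which gives $\ind{\autA_\phi} \le \ad{\phi} + 1$.

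The correctness equation $L_{\bfU}(\autA_\phi) = L_{\bfU}(\phi)$ will be established by the standard correspondence between $\autA_\phi$ and the $\mu$-calculus model-checking parity game on $\phi$ and $\scrG$. The game positions $(v,\psi)$ are in bijection with the labels of run-tree nodes of $\autA_\phi$, the unfolded $\delta$ faithfully simulates the game moves through Booleans, fixpoints, variables, and modalities, and the priority of a position agrees with $\alpha$. Accepting runs on $\scrG$ therefore coincide with winning strategies of the existential player, and by the fundamental theorem of $\mu$-calculus model checking the latter exist iff $v_{\rmI} \in \sem{\phi}^{\scrG}$. The main obstacle is pinning down the color assignment so that it really fits within $\ad{\phi}+1$ values: the syntactic definition of $\ad{\phi}$ via the $\Sigma_n/\Pi_n$ hierarchy in Section~\ref{subsec:calculus} admits substitutional rewritings (rule (iv)) that can hide alternations, so I have to argue that the linearisation of bound variables above never forces more color jumps than those rewritings would charge. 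Once this book-keeping is in place, the remaining steps---termination of the unfolding, the size estimate, and the game-theoretic correctness argument---are routine.
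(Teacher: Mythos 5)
Your construction is the standard Wilke-style translation that the paper itself invokes (it omits the proof and cites~\cite{Wilke:2001}): states are the syntactically distinct subformulas, the transition of a state is obtained by unfolding through Boolean connectives, binders, and variables until a literal or a modality is reached, with guardedness guaranteeing termination of that unfolding. The state count $|Q|=|\sub{\phi}|=|\phi|$, the observation that every subformula of every $\delta(\chi)$ is again of the form $\delta(\chi')$ for some $\chi'\in\sub{\phi}$ (whence $\size{\autA_\phi}\le|\phi|$), and the reduction of $L_{\bfU}(\autA_\phi)=L_{\bfU}(\phi)$ to the model-checking parity game are all correct and are exactly what the intended proof does.

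The one genuine loose end is the part you flag yourself: the bound $\ind{\autA_\phi}\le\ad{\phi}+1$. As stated, your colouring is not yet a proof, and the linearization is the wrong primitive. If you linearize \emph{all} binders and increment the colour at every type change that is not explicitly exempted, independent binders can force spurious increments: for $(\mu X.\cdots)\wedge(\nu Y.\cdots)\wedge(\mu Z.\cdots)$ with pairwise independent binders, the order $X,Y,Z$ yields three fixpoint colours although $\ad{}=1$ allows only two. The clean fix is to drop the linear order and assign to each binder $\phi_X$ the length of the longest chain $\phi_{X_1},\dots,\phi_{X_m}=\phi_X$ of \emph{mutually dependent} fixpoint subformulas with strictly alternating types ($X_j$ occurring free in $\psi_{X_{j+1}}$'s dependency closure), shift all these levels uniformly by at most one so that $\mu$-binders get odd and $\nu$-binders get even colours, and give every remaining state the least colour in use. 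One then proves by induction over the definition of $\Sigma_{n+1}/\Pi_{n+1}$ that every formula in $\Sigma_{n+1}$ (dually $\Pi_{n+1}$) admits such a colouring into an interval of $n+1$ consecutive integers with the correct parity at the top; the only non-trivial case is closure rule~(iv), where substitution of $\psi$ for a free $X$ does not create new mutual dependencies between binders of $\psi$ and binders of $\phi$, so the two colourings can be merged without enlarging the interval. With that invariant in place your argument is complete.
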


\subsection{From Weak Automata to Alternation-free \mucalc}

The following translation is similar to the one 
in~\cite{Kupferman_Vardi2005:linear_branching}.
However, our variant avoids a vectorial form for \mucalc formulas.

Let $\autA = (Q,\delta,q_\rmI,\alpha)$ be a $(\propset,A)$-WA.
Without loss of generality, assume $Q = \{q_0,\ldots,q_n\}$,
$q_{\rmI}=q_0$, and for $i,j\in\{0,\dots,n\}$, if $q_j$ occurs in the
Boolean formula $\delta(q_i)$ then $i<j$ or $\alpha(q_i)\equiv
\alpha(q_j)\!\!\mod 2$. Also, assume that the Boolean constants
$\TRUE$ and $\FALSE$ do not occur in $\autA$'s transitions.

From $\autA$ we define the \mucalc sentence $\phi_{\autA}$ with the
variables $X_0,\ldots,X_n\in\varset$.  Intuitively, $X_i$ evaluates to
the set of vertices of an input that can be labeled by $q_i$ in an
accepting run. We obtain the formula $\varphi_{\autA}$ from the
formulas $\psi_n,\dots,\psi_0$ defined inductively for $i=n,\ldots,0$:
let $\psi_i:=\kappa_i X_i.\,\mathit{tr}_i\big(\delta(q_i)\big)$, where
$\kappa_i:=\mu$ if $\alpha(q_i)$ is odd and $\kappa_i:=\nu$ if
$\alpha(q_i)$ is even, and the function $\mathit{tr_i}$ is as follows:
\begin{equation*}
 \mathit{tr}_i(\phi) :=
 \begin{cases}
   p 
   &\text{if $\phi=p$ with $p\in\propset$}
   \\
   \neg p 
   &\text{if $\phi=\bar{p}$ with $\bar{p}\in\bar{\propset}$}
   \\
   \DIAMOND{a}\psi_j
   &\text{if $\phi=(q_j,\Diamond,a)$ and $j>i$}
   \\
   \DIAMOND{a}X_j
   &\text{if $\phi=(q_j,\Diamond,a)$ and $j\leq i$}
   \\
   \BOX{a}\psi_j
   &\text{if $\phi=(q_j,\Box,a)$ and $j>i$}
   \\
   \BOX{a}X_j
   &\text{if $\phi=(q_j,\Box,a)$ and $j\leq i$}
   \\
   \mathit{tr}_i(\psi) \star \mathit{tr}_i(\psi') 
   &\text{if $\phi=\psi\star\psi'$ with $\star\in\{\wedge,\vee\}$}
 \end{cases}
\end{equation*}
We point out that at most the variables $X_0,\ldots,X_{i-1}$ occur
free in $\psi_i$.  With $\phi_\autA:=\psi_0$ we obtain the following
theorem.  
\begin{theorem}
  \label{thm:weakABA_altfreemucalc}
  For every $(\propset,A)$-WA $\autA$ with $n$ states, there is an
  alternation-free sentence $\phi_\autA$ with
  $|\phi_\autA|\in\bigOh\big(n\cdot\size{\autA}\big)$ and
  $L_{\bfU}(\phi_\autA)=L_\bfU(\autA)$.
\end{theorem}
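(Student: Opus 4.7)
The plan is to verify three properties of $\phi_\autA=\psi_0$: satisfiability of the state-ordering hypothesis, alternation-freeness together with the size bound, and language equivalence. The ordering hypothesis is easy: list the states so that members of earlier components of the weak partition precede members of later ones (within a component, arbitrarily). Since a transition in a weak automaton goes only to the same or a later component, any $q_j$ occurring in $\delta(q_i)$ with $j\le i$ must lie in the same component as $q_i$, so $\alpha(q_i)\equiv\alpha(q_j)\bmod 2$. Eliminating $\TRUE$ and $\FALSE$ from $\autA$'s transitions is routine and affects neither size nor language.

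The main obstacle is alternation-freeness. The danger is that some $\psi_l$ is nested inside a $\psi_k$ of opposite kind with $k<l$ and has $X_k$ free, producing genuine alternation. Unraveling $\mathit{tr}$, $X_k$ occurs free in $\psi_l$ precisely when there is a chain $q_l=q_{j_0},q_{j_1},\ldots,q_{j_s}=q_k$ with $q_{j_{r+1}}$ occurring in $\delta(q_{j_r})$, all intermediate indices strictly greater than $l$, and $j_s=k<l$. The last step $q_{j_{s-1}}\to q_k$ with $k<j_{s-1}$ forces $\alpha(q_{j_{s-1}})\equiv\alpha(q_k)\bmod 2$ by the ordering hypothesis, so these two states lie in a common weak component $Q_L$. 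Combining weakness (transitions stay in the same or later components) with the component-respecting state ordering ($k<l$ implies the component of $q_k$ is no later than that of $q_l$) forces $q_l,q_{j_1},\ldots,q_{j_{s-1}},q_k$ all to inhabit $Q_L$. Hence $\kappa_l=\kappa_k$, and no nested pair of opposite-kind fixpoints shares a free variable, so $\phi_\autA\in\Delta_2$.

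For the size bound, each $\psi_i$ contributes at most $\bigOh(|\delta(q_i)|)$ fresh subformulas---one for each subformula of $\delta(q_i)$ through $\mathit{tr}_i$, together with the fixpoint binder---while the substituted $\psi_j$ with $j>i$ are shared. Summing and using $|\delta(q_i)|\le\size{\autA}$ gives $|\phi_\autA|\in\bigOh(n\cdot\size{\autA})$. For correctness, I would prove by downward induction on $i$ that, for every $(2^\propset,A)$-graph $\scrG$ and every valuation $\sigma$, the set $\sem{\psi_i}^\scrG_\sigma$ consists of exactly those vertices from which $\autA$, started in state $q_i$, has an accepting run, where $\sigma(X_j)$ for $j<i$ is interpreted as the analogous acceptance set for state $q_j$. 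The induction step is Knaster--Tarski combined with the standard parity-game semantics of alternating automata: within a single weak component all states share parity, so the chosen $\kappa_i$ correctly expresses that every infinite play confined to the component satisfies the parity condition, while plays that leave to a later component are handled by the induction hypothesis for the substituted $\psi_j$'s. Taking $i=0$ with the empty valuation yields $L_\bfU(\phi_\autA)=L_\bfU(\autA)$.
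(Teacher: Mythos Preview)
Your alternation-freeness argument is sound: whenever $\psi_l$ sits inside $\psi_k$ one has $k<l$, and your chain argument forces $q_k$ and $q_l$ into the same weak component, so $\kappa_k=\kappa_l$. The gap lies in the claim that $X_k$ is free in $\psi_l$ \emph{only} when $k<l$, which underpins both the paper's remark that at most $X_0,\dots,X_{i-1}$ occur free in $\psi_i$ and your correctness induction. It is false. Take a single weak component $\{q_0,q_1,q_2\}$ (all the same parity, so the ordering hypothesis is satisfied) with $\delta(q_0)=(q_2,\Diamond,a)$, $\delta(q_2)=(q_1,\Diamond,a)$, $\delta(q_1)=(q_0,\Diamond,a)$. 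Then $\psi_2=\kappa X_2.\,\DIAMOND{a} X_1$ and $\psi_0=\kappa X_0.\,\DIAMOND{a}\psi_2=\kappa X_0.\,\DIAMOND{a}\kappa X_2.\,\DIAMOND{a} X_1$, in which $X_1$ is free. Hence $\phi_\autA=\psi_0$ is not a sentence at all, and your downward induction---whose hypothesis only fixes $\sigma(X_j)$ for $j<i$---does not determine $\sem{\psi_0}^\scrG_\sigma$; the step ``taking $i=0$ with the empty valuation'' is ill-posed.

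The source of the problem is that $\mathit{tr}_i$ substitutes the raw $\psi_j$ whenever $j>i$, even when $q_j$ lies in the \emph{same} component as $q_i$; such a $\psi_j$ may carry free variables $X_m$ with $i<m<j$ from that component, and nothing in $\psi_i$ binds them. This is a defect of the construction as literally written in the paper, not only of your verification of it. A correct translation must perform a genuine Beki\'c unfolding within each component: when $q_j$ with $j>i$ is in the same component as $q_i$, one substitutes not $\psi_j$ itself but a version in which the intermediate same-component variables $X_m$ ($i<m<j$) have themselves already been replaced by their fixpoint formulas. With that repair, your overall plan---the component-respecting ordering, the chain argument for alternation-freeness, and the Knaster--Tarski induction for correctness---goes through.
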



\section{From Parity Automata to Weak Automata}
\label{sec:constructions}

In this section, we show that parity automata and weak automata have
the same expressive power over so-called {\em bottlenecked directed
  acyclic graphs} (BDAGs) with a bounded width.  BDAGs are fundamental
to this paper as our collapse results rely on reductions of different
structures---such as various classes of graphs and words---to BDAGs
with a bounded width; see Section~\ref{sec:bounded_connectivity}.
The schematic form of BDAGs is illustrated in
Figure~\ref{fig:bottleneckedDAGs}. 
\begin{figure}[t]
  \vspace{-.3cm}
  \begin{center}
    \includegraphics[scale=.6]{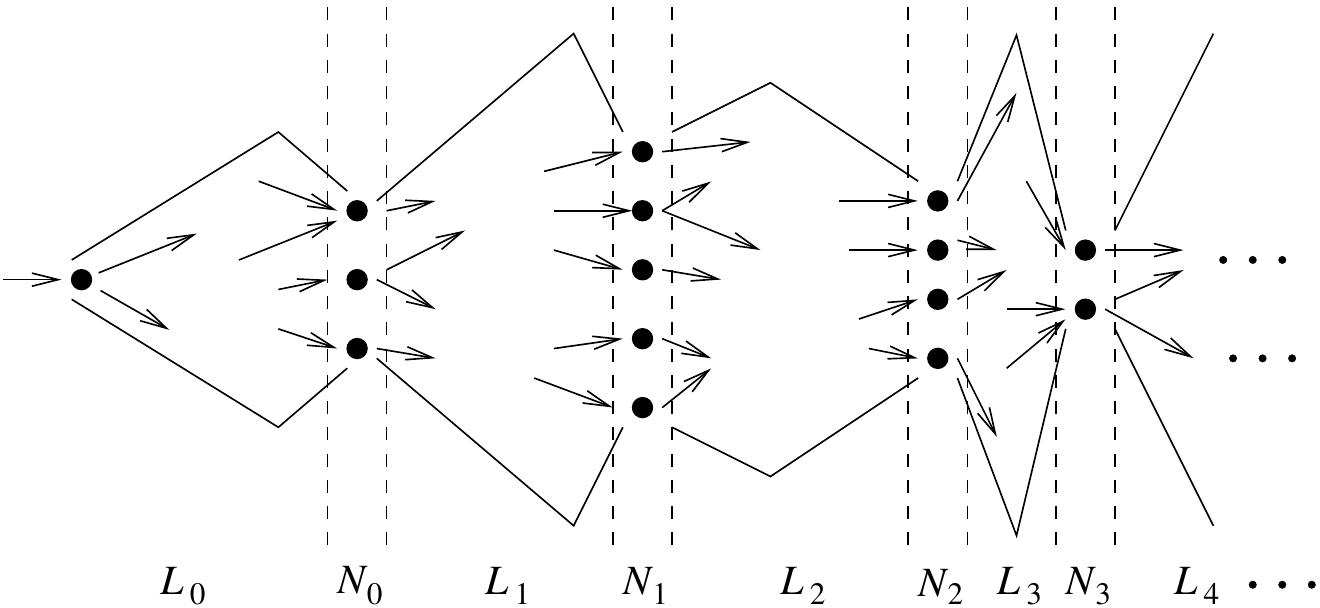}
  \end{center}
  \vspace{-.3cm}
  \caption{\small \label{fig:bottleneckedDAGs}Bottlenecked directed
    acyclic graph (BDAG)}
  \vspace{-.2cm}
\end{figure}
Their definition is as follows.
\begin{definition}
  \label{def:bottleneckedDAG}
  Let $\scrG=\big(V,(E_a)_{a\in A},v_\rmI,\lambda\big)$ be a
  $(\Sigma,A)$-graph.
  \begin{itemize}[--]
  \item $\scrG$ is a \emph{directed acyclic graph} (DAG) iff it does
    not contain cycles, i.e., there are no vertices $v_0,\dots,v_n\in
    V$ with $n\geq 1$ such that $v_0=v_n$ and
    $(v_i,v_{i+1})\in\bigcup_{a\in A}E_a$, for all $i\in\Nat$ with
    $0\leq i<n$.
  \item $\scrG$ is a \emph{bottlenecked DAG} (BDAG) of width
    $w\in\Nat$ iff $\scrG$ is a DAG and $V$ can be split into the
    pairwise disjoint sets $L_0,N_0,L_1,N_1,\dots$ such that
    \begin{enumerate}[(i)]
    \item $\bigcup_{a\in A} E_a\subseteq {\bigcup_{i\in\Nat}}
      \big((L_i\times L_i)\cup (L_i\times N_i)\cup (N_i\times
      L_{i+1})\big)$, 
    \item $w=\sup\Setx{|N_i|}{i\in\Nat}$, and
    \item each $L_i$ is well-founded, i.e., the graph obtained from
      $\scrG$ by restricting the vertex set to $L_i$ does not contain
      infinite paths.
    \end{enumerate}    
  \end{itemize}
\end{definition}
Note that BDAGs naturally define a {\em connectivity} measure, which
is given by their widths: removing the vertices in the $N_i$s
disconnects the structure into DAGs in which all paths are finite and
thus the infinite behavior described by the original structure is
eliminated.

Before presenting our collapse results in
Section~\ref{sec:bounded_connectivity}, we need the following
construction, parametric in $w\in\Nat$, that translates parity
automata into language-equivalent weak automata with respect to the
class of bottlenecked graphs of width at most $w$.
In the following, let $w\in\Nat$ and let $\BDAG_{\leq w}$ be the class
of $(2^\propset,A)$-graphs that are BDAGs of width at most $w$.
Moreover, for $n\in\Nat$, we abbreviate the set
  $\set{0,1,\dots,n}$ by $[n]$.

\subsection{Rankings}

Let $\autA = (Q, \delta, q_\rmI, \alpha)$ be a $(\propset,A)$-PA and
$\rho: R \to V\times Q$ a run of $\autA$ on $\scrG\in\BDAG_{\leq w}$
with $\scrG=\big(V,(E_a)_{a\in A},v_{\rmI}, \lambda\big)$.  Without
loss of generality, we assume that in the run $\rho$ equally labeled
nodes have isomorphic subtrees and therefore that $\rho$ is
\emph{memoryless}; formally, for all $x,y\in R$ if $\rho(x)=\rho(y)$
then for all $z\in\Nat^*$, whenever $xz\in R$ then $yz\in R$ and
$\rho(xz)=\rho(yz)$.
For the memoryless run $\rho$, we define the graph
$G^\rho:=(V^\rho,E^\rho)$ with $V^\rho := \Setx{\rho(x)}{\text{$x \in
    R$}}$ and $ E^\rho := \Setx{\big(\rho(x), \rho(y)\big)}
{\text{$x,y \in R$ and $y$ is a child of $x$}}$.  The graph $G^\rho$
is a representation of the memoryless run $\rho$ in which equally
labeled nodes are merged. 
Furthermore, $G^{\rho}$ is a BDAG of width at most $|Q|w$. 

Let $c\geq0$.  A state $q\in Q$ is \emph{$c$-releasing} iff
$\alpha(q)>c$ and $\alpha(q)\not\equiv c\!\!\mod 2$. 
An infinite path of the form $(h_0,q_0) (h_1,q_1) \dots$ in $G^\rho$
is \emph{$c$-dominated} iff there is a state $q\in\inf(q_0q_1\dots)$
with $\alpha(q)=c$ and no $c$-releasing state in $\inf(q_0q_1\dots)$.
The function $f:V^\rho\rightarrow[2|Q|w]$ is a
\emph{$c$-ranking} for $G^\rho$ iff the following two conditions hold:
\begin{enumerate}[(i)]
\item For all $(h,q) \in V^\rho$, if $f(h,q)$ is odd then $\alpha(q)\not=c$.
\item For all $v,v'\in V^\rho$ with $v=(h,q)$, if $(v,v')\in E^\rho$
  and $f(v)<f(v')$ then $q$ is $c$-releasing.
\end{enumerate}
The $c$-ranking $f$ is \emph{safe} iff every infinite path in $G^\rho$
either visits infinitely many vertices with $c$-releasing states or
$f$ gets trapped in an odd rank on the path, i.e., iff for every
infinite path $(h_0,q_0)(h_1,q_1) \dots$ in $G^\rho$, either there is a
state $q\in\inf(q_0q_1\dots)$ with $\alpha(q)>c$ and
$\alpha(q)\not\equiv c\!\!\mod 2$, or there is an integer $n\in \Nat$ such
that $f(h_n,q_n)$ is odd and $f(h_j,q_j) = f(h_n,q_n)$, for all $j\ge
n$.
We point out that the color $\alpha(q)$ of a state $q\in Q$ and the
rank $f(h,q)$ of a vertex $(h,q)\in V^\rho$ have different
meanings. In particular, the parities of $\alpha(q)$ and $f(h,q)$ can
differ.

It holds that the run $\rho$ is accepting iff for all odd $c\geq1$,
all infinite paths in $G^{\rho}$ are not $c$-dominated.  The following
theorem reduces the problem of checking whether every infinite path in
$G^\rho$ is not dominated by one specific color to the problem of
checking the existence of a safe ranking for $G^\rho$.  
\begin{theorem}
  \label{thm:ranking}
  Let $c\geq0$.  Every infinite path in $G^\rho$ is not $c$-dominated
  iff there is a safe $c$-ranking for $G^\rho$.
\end{theorem}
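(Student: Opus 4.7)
The proof splits into two directions.

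For ($\Leftarrow$), the argument is direct from definitions. Given a safe $c$-ranking $f$, any infinite path $(h_0,q_0)(h_1,q_1)\ldots$ falls into one of two cases by the safety requirement. If some state $q\in\inf(q_0q_1\ldots)$ is $c$-releasing, the path is not $c$-dominated by the very definition of ``$c$-dominated''. Otherwise, there exists $n$ such that $f(h_j,q_j)$ stabilizes at an odd value for all $j\geq n$, whereupon condition (i) of $c$-rankings forces $\alpha(q_j)\neq c$ throughout the tail, so $c\notin\inf(q_0q_1\ldots)$ and the path is again not $c$-dominated.

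For ($\Rightarrow$), the plan is to construct the ranking by an iterative peeling procedure on $G^\rho$, generalizing the Kupferman--Vardi construction~\cite{Kupferman_Vardi2001:weak} from coB\"uchi word automata to our parity/BDAG setting. Starting from $H_0 := G^\rho$, at each even step $2i$ I would collect into $A_{2i}$ the vertices of $H_{2i}$ with no infinite path starting from them in $H_{2i}$, assigning them rank $2i$; at each odd step $2i+1$ I would collect into $A_{2i+1}$ the $c$-releasing vertices of $H_{2i+1}$, augmented by the color-$\neq c$ vertices remaining on infinite $c$-releasing-free tails so that safety is preserved. Setting $f(v) := k$ for $v \in A_k$, condition (i) is immediate from the construction (odd ranks go to vertices with $\alpha(q)\neq c$); condition (ii) follows by a case analysis on the peeling order (a rank-$2i$ vertex cannot have a strictly higher-ranked successor, since such a successor would lift an infinite path to the vertex itself, contradicting its membership in $A_{2i}$); and safety follows because an infinite path avoiding $c$-releasing vertices from some position has non-increasing ranks by (ii), stabilizing at a value that cannot be even without contradicting the defining property of $A_{2i}$.

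The rank bound $f \leq 2|Q|w$, and hence the termination of the peeling, relies crucially on the BDAG structure. I plan to exploit that infinite paths in $G^\rho$ must pass through exactly one vertex from each bottleneck set $N_i$, since the layers $L_i$ are well-founded, and each $N_i$ contains at most $|Q|w$ vertices. This gives rise to a finitely branching ``bottleneck tree'' on which a K\"onig-style compactness argument applies, even though $G^\rho$ itself need not be finitely branching. Combining this with the hypothesis that no $c$-dominated infinite path exists forces the peeling to exhaust $V^\rho$ within $2|Q|w$ rounds.

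The principal obstacle is calibrating the odd-step peeling so that conditions (i), (ii), and safety hold simultaneously while respecting the rank bound $2|Q|w$. A naive choice that peels all color-$\neq c$ vertices at odd steps creates rank increases across non-$c$-releasing edges, violating (ii); conversely, peeling only $c$-releasing vertices risks leaving unranked the ``safe tails'' on which no color-$c$ vertex ever appears. The heart of the proof is to couple the KV-style coB\"uchi argument (morally applied to the subgraph obtained by removing the $c$-releasing vertices) with the BDAG-width bound so that these trade-offs resolve into a clean alternating peeling whose depth is precisely $2|Q|w$.
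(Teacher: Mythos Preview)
Your $(\Leftarrow)$ argument is correct, and your overall $(\Rightarrow)$ plan---a Kupferman--Vardi peeling with the rank bound extracted from the bottleneck width via a K\"onig-type argument---is exactly the approach the paper indicates (the details are deferred to the full version, but the paper explicitly cites the coB\"uchi construction of~\cite{Kupferman_Vardi2001:weak} as its basis and notes that the BDAG setting makes ``some arguments more subtle'').

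The genuine gap is precisely where you locate it, but your proposed odd step points in the wrong direction. Peeling the $c$-releasing vertices at odd steps misreads their role in condition~(ii): they are not vertices that ought to receive odd ranks, but rather the \emph{only} vertices across whose outgoing edges the rank is permitted to increase. With your peeling, every $c$-releasing vertex is gone after step~$1$, and the unspecified ``augmentation'' must then carry the entire construction. Moreover, ``color-$\neq c$ vertices on $c$-releasing-free tails'' is an existential condition on paths, whereas condition~(ii) forces a universal one: a non-$c$-releasing vertex placed in $A_{2i+1}$ must have \emph{every} remaining successor already in $\bigcup_{j\le 2i+1}A_j$, so the set you peel at an odd step has to be successor-closed among the non-$c$-releasing vertices.

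The clean resolution is to neutralize the $c$-releasing vertices rather than target them. Delete their outgoing edges to obtain an auxiliary graph $G^-$ on the same vertex set, and run the \emph{unmodified} coB\"uchi peeling on $G^-$ with rejecting set $F:=\{(h,q)\in V^\rho:\alpha(q)=c\}$: the even step removes vertices with no infinite $G^-$-path (so every $c$-releasing vertex, now a sink, receives rank~$0$), and the odd step removes vertices from which no $F$-vertex is reachable in the current subgraph. Condition~(i) and safety then follow verbatim from the coB\"uchi analysis, since every infinite path of $G^-$ avoids $c$-releasing vertices and hence, by hypothesis, visits $F$ only finitely often. Condition~(ii) holds on all of $E^\rho$ because ranks are non-increasing along $G^-$-edges and the only edges of $G^\rho$ absent from $G^-$ emanate from $c$-releasing vertices, where~(ii) is vacuous. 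Since $G^-$ is still a BDAG of width at most $|Q|w$, your bottleneck/K\"onig argument bounds the ranks by $2|Q|w$ as required.
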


The proof of Theorem~\ref{thm:ranking} is based on ingredients that
appear in the Kupferman and Vardi's correctness proof of the
construction that translates alternating coB\"uchi word automata into
weak alternating word automata~\cite{Kupferman_Vardi2001:weak}.  Since
our automata are parity automata that operate over BDAGs instead of
words, some arguments are more subtle than in the
coB\"uchi-word-automata case.

In the following, we show that the existence of a safe ranking can be
checked by a B\"uchi automaton. The ranks are guessed during a run
with the states of the B\"uchi automaton.  The conditions~(i) and~(ii)
of a ranking are locally checked by the transition function of the
automaton. With the acceptance condition of the automaton we check
whether the guessed ranking is safe.  Details of the construction are
given in Theorem~\ref{thm:color_elim} below.  For proving the
correctness of the construction, it does not suffice to only assume
the existence of a safe ranking. The ranking must also satisfy
additional technical requirements, which are guaranteed by the
following lemma. 
\begin{lemma}
  \label{lem:normalized_ranking}
  Let $c\geq0$.  If $G^\rho$ has a safe $c$-ranking then there is a
  safe $c$-ranking $g:V^\rho\rightarrow [2|Q|w]$ that
  satisfies the following additional properties:
  \begin{itemize}[--]
  \item $g(v_\rmI,q_\rmI)=2|Q|w$, and 
  \item $g(h_1,q)=g(h_2,q)$, for all vertices $(h_1,q),(h_2,q)\in
    V^{\rho}$ for which there exists a vertex $(h',p)\in V^{\rho}$
    with $\big((h',p),(h_1,q)\big)\in E^{\rho}$ and
    $\big((h',p),(h_2,q)\big)\in E^{\rho}$.
  \end{itemize}
\end{lemma}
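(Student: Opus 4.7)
The plan is to start with the given safe $c$-ranking $f$ and massage it in two stages, each targeting one of the required additional properties while preserving the defining conditions (i), (ii) and safety of a $c$-ranking.

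\smallskip

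\noindent\emph{Stage 1 (initial rank).} I would simply raise the rank at the initial vertex: let $f'$ agree with $f$ everywhere except $f'(v_\rmI,q_\rmI):=2|Q|w$. Since $2|Q|w$ is even, condition~(i) is preserved at the initial vertex. For any outgoing edge $\big((v_\rmI,q_\rmI),v'\big)\in E^\rho$, the rank cannot strictly increase because $f'(v_\rmI,q_\rmI)$ is already maximal in the range $[2|Q|w]$, so condition~(ii) imposes no constraint on $q_\rmI$. Safety is unaffected, since altering the ranking at a single vertex leaves the eventual behaviour of every infinite path in $G^\rho$ untouched.

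\smallskip

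\noindent\emph{Stage 2 (sibling uniformity).} The direct-sibling equality demanded by the lemma, together with transitivity of equality, forces $g$ to be constant on the equivalence relation $\approx$ on $V^\rho$ generated by $(h_1,q)\sim(h_2,q)$ whenever some $(h',p)\in V^\rho$ has edges in $E^\rho$ to both. I would process the $\approx$-equivalence classes in an order compatible with the BDAG partition $L_0^\rho,N_0^\rho,L_1^\rho,N_1^\rho,\ldots$ of $G^\rho$ (recall that $G^\rho$ is itself a BDAG of width at most $|Q|w$), using the well-founded ordering inside each $L_i^\rho$. For each class $C$ of state $q$, once the $g$-values on all predecessor classes have been fixed, I define $g(C)$ to be the largest value in $[2|Q|w]$ that is bounded above by $g(u)$ for every predecessor $u$ of some $v\in C$ whose state is not $c$-releasing, and that is forbidden to be odd whenever $\alpha(q)=c$. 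By construction $g$ is constant on $\approx$-classes (hence satisfies the direct-sibling property), condition~(ii) holds thanks to the predecessor constraint, and condition~(i) holds because odd ranks are only permitted at non-$c$-coloured classes. The initial-rank property is untouched since the root class is processed first, where the predecessor constraint is vacuous and the value $2|Q|w$ is available.

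\smallskip

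\noindent\emph{Main obstacle.} The delicate step is to verify that $g$ is still safe. Consider an infinite path $\pi=(h_0,q_0)(h_1,q_1)\ldots$ in $G^\rho$ with only finitely many $c$-releasing states. By safety of $f'$, the $f'$-ranks along $\pi$ eventually stabilise at some odd value $k$. One then argues that the maximality built into the choice of $g(C)$ propagates this stabilisation to the sequence of $\approx$-classes traversed by $\pi$, so $g$ also stabilises at an odd value along $\pi$. The care needed here is to handle the fact that an equivalence class may gather vertices reached from several different predecessor classes, so the predecessor bound used to define $g(C)$ is a minimum over a potentially large set; however, the BDAG-layer ordering guarantees that this minimum is well-defined before $C$ is processed, and the safety argument reduces to a tail analysis along $\pi$ that mirrors the corresponding argument for $f'$.
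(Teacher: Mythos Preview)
The paper omits the proof of this lemma (it is deferred to the full version), so I can only assess your proposal on its own merits rather than compare it to the paper's argument.

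Your Stage~2 does not use $f'$ at all: the value $g(C)$ is simply the maximum rank compatible with the predecessor bounds and the parity constraint, computed greedily. This makes Stage~1 redundant, and it also means the coupling between $f'$ and $g$ that your safety argument leans on is weaker than you assume.

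The real gap is in the safety step. Fix an infinite path $\pi=(h_0,q_0)(h_1,q_1)\ldots$ with no $c$-releasing states, along which $f'$ eventually stabilises at an odd value~$k$. You want $g$ to stabilise at an odd value as well. But $g(C_i)$, for the $\approx$-class $C_i$ of $(h_i,q_i)$, is bounded above by the \emph{minimum} of $g$ over \emph{all} non-releasing predecessors of \emph{all} vertices in $C_i$, not just by $g(h_{i-1},q_{i-1})$. An induction along your processing order yields only $g(C)\ge\min_{v\in C}f'(v)$, and that minimum can fall below $k$ whenever $C_i$ contains off-path vertices with small $f'$-values. Nothing you have written rules out $g$ stabilising at an even value $m<k$ along $\pi$: at each step the value $m$ could be forced by an off-path predecessor of some sibling, and the ``not $c$-dominated'' property of $\pi$ gives no control over those side branches. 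The phrase ``maximality propagates the stabilisation'' is precisely where a genuine argument is needed, and as written it does not go through.

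A secondary well-definedness issue: you claim the $\approx$-classes can be processed in an order compatible with the BDAG layering. One can check that each class lies inside a single super-layer $L_i^\rho\cup N_i^\rho$, but within such a layer a class may contain vertices at different depths of the well-founded part, and it is not obvious that the induced quotient graph on classes is acyclic (two classes $C,C'$ can each contain a predecessor of a vertex in the other without creating a cycle in $G^\rho$). You need either to prove acyclicity of the quotient or to recast the construction as a greatest fixed point.
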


We finally present the construction of the B\"uchi automaton that
checks whether a safe ranking exists.
\begin{theorem}
  \label{thm:color_elim}
  Let $c\geq0$.  There is a $(\propset,A)$-BA $\autB_c$ with
  $|Q|\cdot(2|Q|w+1)$ states and $L_{\BDAG_{\leq w}}(\autB_c)$ equals
  \begin{equation*}
    \Setx{\scrG\in\BDAG_{\leq w}}
    {\text{there is a memoryless run $\rho$ of
        $\autA$ on $\scrG$ such that $G^\rho$ has a safe $c$-ranking}}
    \,.
  \end{equation*}
  Furthermore,
  $\size{\autB_c}\in\bigOh\big(\size{\autA}\cdot(|Q|w+1)\big)$.
\end{theorem}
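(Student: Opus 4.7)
The plan is to construct $\autB_c = (Q', \delta', q'_\rmI, \alpha')$ whose states $Q' := Q \times [2|Q|w]$ pair a state of $\autA$ with a guessed rank, so that the transitions of $\autB_c$ locally enforce conditions (i) and (ii) of a $c$-ranking while the B\"uchi acceptance condition globally checks safety. First, I would set $q'_\rmI := (q_\rmI, 2|Q|w)$, matching the first item of Lemma~\ref{lem:normalized_ranking}. Then $\delta'((q, r))$ is obtained from $\delta(q)$ by replacing every modal atom $(p, \Diamond, a)$ with $\bigvee_{r' \in R(q, p, r)} ((p, r'), \Diamond, a)$, and analogously for $\Box$, where
\[
  R(q, p, r) := \setx{r' \in [2|Q|w]}{\bigl(r' \le r \text{ or } q \text{ is } c\text{-releasing}\bigr) \text{ and } \bigl(r' \text{ odd implies } \alpha(p) \ne c\bigr)} .
\]
The two conjuncts inside $R$ correspond exactly to conditions (ii) and (i) of a $c$-ranking. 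Finally, $\alpha'((q, r)) := 2$ if $q$ is $c$-releasing or $r$ is odd, and $\alpha'((q, r)) := 1$ otherwise.

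For the inclusion $\supseteq$, given a memoryless run $\rho$ of $\autA$ such that $G^\rho$ admits a safe $c$-ranking, I would invoke Lemma~\ref{lem:normalized_ranking} to replace the ranking by a normalized one $g$, and then annotate every node of $\rho$ labelled $(v, q)$ with the rank $g(v, q)$. Conditions (i) and (ii) for $g$ combined with the definition of $R$ ensure the annotated tree is a valid run of $\autB_c$; the second item of Lemma~\ref{lem:normalized_ranking} ensures that all copies produced by a single $\Box$ atom indeed share a rank, as required by the disjunctive form of $\delta'$. Safety of $g$ then forces every infinite branch either to visit a $c$-releasing state infinitely often or to trap the rank at an odd value from some point on, so the run is B\"uchi-accepting.

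For the converse $\subseteq$, from any accepting run of $\autB_c$ on $\scrG$ I would invoke memoryless-strategy determinacy for B\"uchi games to obtain an accepting run $\rho'$ that is memoryless. Forgetting the rank component of labels in $\rho'$ yields a memoryless run $\rho$ of $\autA$ once a single canonical rank $g(v, q)$ is chosen for each pair $(v, q)$ occurring in $\rho$; such a choice can be made by following a spanning tree of $\rho$. Conditions (i) and (ii) for $g$ on $V^\rho$ then follow directly from the definition of $R$ used in $\delta'$, and safety follows from the B\"uchi condition: on any infinite path of $G^\rho$ with only finitely many $c$-releasing states, condition (ii) forces the ranks to be eventually non-increasing in the bounded range $[2|Q|w]$, hence to stabilize, and the stabilized rank must be odd because the path hits an accepting state infinitely often. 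The main obstacle is precisely this last extraction step: reconciling the potentially many per-branch rank annotations of $\rho'$ into a single globally consistent ranking on $G^\rho$. This is exactly where memoryless-strategy determinacy combined with condition (ii) as propagated by $R$ makes the canonical choice of rank compatible across different occurrences of the same vertex.

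Finally, $|Q'| = |Q| \cdot (2|Q|w + 1)$ holds by construction. For $\size{\autB_c}$, the distinct subformulas occurring in $\autB_c$'s transitions split into rank-substituted images of subformulas of the $\delta(q)$, of which there are $\bigOh(\size{\autA} \cdot (|Q|w + 1))$ because each original subformula admits at most $\bigOh(|Q|w + 1)$ rank-parametrised images, and new modal atoms of the form $((p, r'), \Diamond, a)$ or $((p, r'), \Box, a)$, of which there are also $\bigOh(\size{\autA} \cdot (|Q|w + 1))$ (one per original modal atom and choice of $r'$). Hence $\size{\autB_c} \in \bigOh(\size{\autA} \cdot (|Q|w + 1))$, as claimed.
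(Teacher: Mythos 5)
Your construction is essentially identical to the paper's (same state space, initial state, rank-decreasing transition relaxation, and B\"uchi condition), with only the cosmetic difference that you enforce condition~(i) of a $c$-ranking at the source of each transition via the clause ``$r'$ odd implies $\alpha(p)\ne c$'' inside $R(q,p,r)$, whereas the paper enforces it at the state itself by setting $\eta(q,r):=\FALSE$ when $\alpha(q)=c$ and $r$ is odd; both directions of your correctness argument, including the use of Lemma~\ref{lem:normalized_ranking} for $\supseteq$ and the stabilization-of-ranks argument for $\subseteq$, mirror the paper's proof. The rank-consistency issue you flag in the $\subseteq$ direction is real, but the paper handles it no more rigorously than you do (it simply asserts the uniqueness of ranks per vertex--state pair without loss of generality), so your proposal is correct and takes the same route.
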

\begin{proof}
  We define $\autB_c$ as $\big(Q\times[2|Q|w], \eta,
  p_\rmI, \beta\big)$, where $p_\rmI$, $\eta$, and $\beta$ are as
  follows:
  \begin{itemize}[--]
  \item The initial state $p_\rmI$ is the tuple $(q_\rmI, 2|Q|w)$.
  \item To define the transition function $\eta$, we need the
    following two definitions.
    (1)~For $q \in Q$ and $r, r' \in[2|Q|w]$, we write $r' \preceq_q
    r$ if either $r' \le r$ or $q$ is $c$-releasing.  (2)~For $\phi
    \in \Bool^+\big(\propset\cup\bar{\propset}\cup (Q \times
    \set{\Diamond,\Box}\times A)\big)$, $q \in Q$, and $r \in
    [2|Q|w]$, we define $\release_q(\phi, r)$ as the positive Boolean
    formula that we obtain by replacing each proposition $(p, \star,
    a)$ in $\phi$ by the disjunction $\bigvee_{r' \preceq_q r}
    \big((p, r'),\star,a\big)$.
    For $q \in Q$ and $r\in [2|Q|w]$, we define
    \begin{equation*}
      \eta(q, r) :=
      \begin{cases}
        \release_q\big(\delta(q), r\big) & 
        \text{if $\alpha(q)\not=c$ or $r$ is even,}
        \\
        \FALSE                    & \text{otherwise.}     
      \end{cases}
    \end{equation*}
  \item The acceptance condition is determined by 
    $\beta:Q\times[2|Q|w]\rightarrow \{1,2\}$ where 
    \begin{equation*}
      \beta(q,r):=
      \begin{cases}
        2 & \text{if $q$ is $c$-releasing or $r$ is odd,}
        \\
        1 & \text{otherwise.}
      \end{cases}
    \end{equation*}
  \end{itemize}
  
  Obviously, $\autB_c$ has $|Q|\cdot(2|Q|w+1)$ states.  An upper bound
  on the number of distinct subformulas in the positive Boolean
  formula $\eta(q,r)$ for $q\in Q$ and $r\in[2|Q|w]$ is
  $\bigOh\big(m+|Q|\cdot(|Q|w+1)\big)$, where $m$ is the number of
  distinct formulas in $\delta(q)$.  Note that the disjunction
  $\bigvee_{r'\preceq_q r}\big((p,r'),\star,a\big)$ in $\eta(q,r)$,
  which replaces a proposition of the form $(p,\star,a)$ in
  $\delta(q)$, is a subformula of $\bigvee_{0\leq r'\leq 2|Q|w}
  \big((p,r'),\star,a)\big)$. The disjunction $\bigvee_{0\leq r'\leq
    2|Q|w}\big((p,r'),\star,a)\big)$ has $\bigOh(|Q|w+1)$ subformulas.
  Since we count multiple occurrences of the same subformula in the
  transitions of an automaton only once, we obtain that
  $\size{\autB_c}\in\bigOh\big(\size{\autA}\cdot(|Q|w+1)+|Q|\cdot(|Q|w+1)\big)
  \subseteq \bigOh\big(\size{\autA}\cdot(|Q|w+1)\big)$.
  It remains to prove that $\scrG\in L_{\BDAG_{\leq w}}(\autB_c)$ iff
  there is a run $\rho$ of $\autA$ on $\scrG$ such that the graph
  $G^\rho$ has a safe $c$-ranking.

  \medskip
  \noindent
  {$(\Rightarrow)$}
    Let $\rho': R \to V\times \big(Q \times [2|Q|w]\big)$ be an
    accepting, memoryless run of $\autB_c$ on
    $\scrG=\big(V,(E_a)_{a\in A},v_{\rmI},\lambda\big)$.  We define
    the tree $\rho: R \to V\times Q$ with $\rho(x) := (h,q)$, for
    every $x \in R$ with $\rho'(x) = \big(h,(q, r)\big)$, i.e., the
    labels of the nodes in $\rho$ are the projections of the labels of
    $\rho'$ on $V\times Q$.
    The tree $\rho$ is a run of $\autA$ on $\scrG$ since the
    transition function of $\autB_c$ just annotates state of $\autA$
    by ranks.  We can assume that there are no $x,y\in R$ with
    $\rho'(x)=(h,(q,r))$, $\rho'(y)=(h,(q,r))$, and $r\not=r'$. That
    is, the rank $r\in[2|Q|w]$ assigned by the run $\rho'$ to a vertex
    $(h,q)$ in the graph $G^\rho$ representing $\rho$ is unique. We
    define $f(h,q):=r$.

    It follows from the definition of $\eta$ that $f$ is a $c$-ranking
    for $G^{\rho}$.
    Since $\rho'$ is accepting, on every branch $\pi$ in $\rho'$ there
    are either $c$-releasing states or odd ranks which, in both cases,
    occur infinitely often.  The case where $\pi$ visits infinitely
    many vertices with $c$-releasing states is obvious. Assume that
    $\pi$ visits only finitely many vertices with $c$-releasing
    states. Then, the ranks do not increase from some point
    onwards. Thus, they must eventually stabilize. We conclude that
    $f$ is safe.

  \medskip
  \noindent
  {$(\Leftarrow)$}
    Let $f:V^\rho\rightarrow[2|Q|w]$ be a safe $c$-ranking on the
    graph representation $G^{\rho}=(V^\rho,E^\rho)$ of the run $\rho:R
    \rightarrow V\times Q$ of $\autA$ on $\scrG=\big(V,(E_a)_{a\in
      A},v_{\rmI},\lambda\big)$.
    The idea is to attach the ranks given by $f$ to the labels of the
    nodes in $\rho$ to obtain an accepting run $\rho':R\to V\times
    \big(Q\times[2|Q|w]\big)$ of $\autB_c$ on $\scrG$. However, we
    cannot use $f$ directly, since the following situation might
    occur.  Assume that there are vertices $h,h_1,h_2\in V$ with
    $(h,h_1),(h,h_2)\in E_a$, for some $a\in A$. Furthermore, assume
    $\rho(x)=(h,p)$ and $\delta(p)=(q,\Box,d)$, for some node $x\in R$
    and states $p,q\in Q$. Then, the node $x$ must have children
    $y,y'\in R$ such that $\rho(y)=(h_1,q)$ and $\rho(y')=(h_2,q)$.
    If $\rho'$ attaches the ranks of $(h,p)$, $(h_1,q)$ and $(h_2,q)$
    to the labels of the nodes $x$, $y$, and $y'$, respectively, i.e.,
    $\rho'(x)=\big(p,f(h,p)\big)$, $\rho'(y)=\big(q,f(h_1,q)\big)$,
    $\rho'(y')=\big(q,(f(h_2,q)\big)$, we do not obtain a run when the
    ranks of $(h_1,q)$ and $(h_2,q)$ differ. However,
    Lemma~\ref{lem:normalized_ranking} allows us to assume that
    $f(h_1,q)=f(h_2,q)$.  In the following, let $f$ be a safe
    $c$-ranking with the additional \mbox{properties in
    Lemma~\ref{lem:normalized_ranking}.}

    We define the tree $\rho': R \to V\times \big(Q \times
    [2|Q|w]\big)$ now by $\rho'(x):=\big(h,(q,f(h,q))\big)$, for $x\in
    R$ with $\rho(x)=(h,q)$.
    We first show that $\rho'$ is a run of $\autB_c$ on $\scrG$.
    By definition of $\rho'$, we have $\rho'(\epsilon) =
    \big(v_{\rmI},(q_\rmI, 2|Q|w)\big)$.  Hence, the root of $\rho'$
    is well labeled with respect to the initial state of $\autB_c$.
    Consider a node $x \in R$ with $\rho(x) = (h,q)$ and assume that
    $r \in [2|Q|w]$ is the rank of $(h,q)$, i.e., $r=f(h,q)$.  Let $S$
    be the set of labels of the successors of node $x$ in $\rho$.  By
    condition~(ii) of a ranking, we have $f(h',q') \le r$, for each
    $(h',q)\in S$ if $q$ is not $c$-releasing.  Furthermore,
    $\alpha(q)=c$ and $r$ odd cannot hold at the same time because of
    condition~(i) of a ranking.  Moreover, by
    Lemma~\ref{lem:normalized_ranking}, we have $f(h',q')=f(h'',q'')$
    whenever $q'=q''$, for all $(h',q'),(h'',q'')\in S$.  Thus, the
    set $S'$ of the labels of the successor nodes of $x$ in $\rho'$ is
    $\Setx{(h',(q',r'))}{(h',q')\in S\text{ and }r'=f(h',q')}$.  It is
    now easy to obtain from this set $S'$ of labels a model of
    $\eta(q, r)$ which witnesses that the labeling corresponds to a
    valid transition of $\autB_c$ with respect to the vertex label
    $\lambda(h)$.

    The run $\rho'$ is accepting: since $f$ is safe, every infinite
    path in $G^{\rho}$ that does not visit $c$-releasing vertices
    infinitely often gets trapped in an odd rank.  Then, by the
    definition of $\beta$, every infinite branch in $\rho'$ is
    accepting.  
\end{proof}

\subsection{Applications}

The first application is to obtain weak automata from B\"uchi
automata.
\begin{lemma}
  \label{lem:BA_WA}
  Let $\autA$ be a $(\propset,A)$-BA with $n$ states. There is a
  $(\propset,A)$-WA $\autB$ with $n(2nw+1)$ states and $L_{\BDAG_{\leq
      w}}(\autB)=L_{\BDAG_{\leq w}}(\autA)$.  Furthermore,
  $\size{\autB}\in\bigOh\big(\size{\autA}\cdot(nw+1)\big)$.
\end{lemma}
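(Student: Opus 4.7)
The plan is to proceed by duality. Rather than applying Theorem~\ref{thm:color_elim} directly to $\autA$ (where $c=1$ would make the color-$2$ states of $\autA$ $c$-releasing, so the rank could reset upward and produce SCCs of mixed parity that defeat any rank-based weak partition), I would first dualize $\autA$, apply Theorem~\ref{thm:color_elim} to the dual, and then dualize back.

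Explicitly, form the dual $\overline{\autA}$ as in Section~\ref{subsec:automata}. Since $\autA$ is B\"uchi with colors in $\{1,2\}$, the dual $\overline{\autA}$ has colors in $\{2,3\}$, the same $n$ states, the same size, and $L_{\BDAG_{\leq w}}(\overline{\autA}) = \overline{L_{\BDAG_{\leq w}}(\autA)}$. Apply Theorem~\ref{thm:color_elim} to $\overline{\autA}$ with $c=3$, the unique odd color of $\overline{\autA}$. The result is a $(\propset,A)$-BA $\autB_3$ with $n(2nw+1)$ states, $\size{\autB_3}\in\bigOh\big(\size{\autA}\cdot(nw+1)\big)$, and $L_{\BDAG_{\leq w}}(\autB_3) = L_{\BDAG_{\leq w}}(\overline{\autA})$.

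The main step is to verify that $\autB_3$ is in fact weak. Because the maximum color in $\overline{\autA}$ equals $c=3$, no state of $\overline{\autA}$ is $c$-releasing (being $c$-releasing requires a color strictly above $c$), so the relation $r'\preceq_q r$ in the construction collapses to $r'\leq r$ uniformly. Hence the rank coordinate is monotonically non-increasing along every transition of $\autB_3$. Using the partition $Q_i := \{(q,r)\in Q\times[2nw] : r = 2nw - i\}$ for $i = 0,\ldots,2nw$, each $Q_i$ has uniform parity (since $\beta(q,r)=2$ iff $r$ is odd, which depends only on $i$), and every transition from a state in $Q_i$ targets some state in $Q_j$ with $j\geq i$. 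This matches exactly the definition of a weak alternating automaton.

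Finally, set $\autB := \overline{\autB_3}$. Dualization preserves weakness and complements the accepted language (as noted in Section~\ref{subsec:automata}), so $\autB$ is a weak $(\propset,A)$-automaton with $n(2nw+1)$ states, size $\bigOh\big(\size{\autA}\cdot(nw+1)\big)$, and $L_{\BDAG_{\leq w}}(\autB) = L_{\BDAG_{\leq w}}(\autA)$. The hard part is the weakness argument for $\autB_3$; the dualization trick is exactly what makes it work, by shifting the B\"uchi colors $\{1,2\}$ up to $\{2,3\}$ so that the odd color to be eliminated sits at the top of the parity hierarchy and no releasing states remain.
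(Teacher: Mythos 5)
Your proof is correct and is essentially the paper's own argument: dualize $\autA$ so that the sole odd color becomes the maximal one, apply Theorem~\ref{thm:color_elim} to that color, observe that the absence of releasing states makes the ranks non-increasing and hence the resulting automaton weak, and dualize back. The only cosmetic difference is that you keep the dual's colors at $\{2,3\}$ and take $c=3$ where the paper renormalizes to the coB\"uchi colors $\{0,1\}$ and takes $c=1$; note only that the language equality $L_{\BDAG_{\leq w}}(\autB_3)=L_{\BDAG_{\leq w}}(\overline{\autA})$ additionally invokes Theorem~\ref{thm:ranking}, which the paper cites explicitly at this step.
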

\begin{proof}
  First construct from $\autA$ the coB\"uchi automaton $\autC$ by
  dualizing the transition function of $\autA$ and its acceptance
  condition. $\autC$ accepts the complement of $\autA$. Let $\autB_1$
  be the B\"uchi automaton obtained from Theorem~\ref{thm:color_elim}
  for the only odd color~$1$. This automaton is weak as $\autC$ does
  not have $1$-releasing states. It has $n(2nw+1)$ states and
  $\size{\autB_1}\in \bigOh\big(\size{\autA}\cdot(nw+1)\big)$.  It
  follows from Theorem~\ref{thm:ranking} that $L_{\BDAG_{\leq
      w}}(\autB_1)=L_{\BDAG_{\leq w}}(\autC)$.  The dual automaton of
  $\autB_1$ accepts $L_{\BDAG_{\leq w}}(\autA)$.
\end{proof}

We now show how to combine B\"uchi automata for different odd colors
from Theorem~\ref{thm:color_elim} so that they simultaneously check
the existence of safe rankings.
\begin{lemma}
  \label{lem:PA_BA}
  Let $\autA$ be a $(\propset,A)$-PA with $n$ states and index
  $k$. There is a $(\propset,A)$-BA $\autB$ with $\bigOh\big(kn
  (2nw+1)^{\lceil k/2\rceil}\big)$ states and $L_{\BDAG_{\leq
      w}}(\autB)=L_{\BDAG_{\leq w}}(\autA)$.  Moreover,
  $\size{\autB}\in\bigOh\big(k\size{\autA}(2nw+1)^{\lceil
    k/2\rceil}\big)$.
\end{lemma}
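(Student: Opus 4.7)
The key observation is that a memoryless run $\rho$ of $\autA$ is accepting if and only if, for every odd $c$ in the image of $\alpha$, no infinite path in the graph $G^{\rho}$ is $c$-dominated. Indeed, if some infinite path has largest infinitely-occurring color equal to some odd $c^*$, then that path is $c^*$-dominated (no state of color $>c^*$ occurs infinitely often, so no $c^*$-releasing state occurs infinitely often); conversely, any $c$-dominated path for odd $c$ has maximum infinitely-occurring color precisely $c$, violating acceptance. By Theorem~\ref{thm:ranking} this is equivalent to the existence, for every odd color $c$ occurring in $\autA$, of a safe $c$-ranking for $G^{\rho}$. Let $c_1,\dots,c_m$ be the odd colors occurring in $\autA$, so $m\leq\lceil k/2\rceil$.

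The plan is to run the B\"uchi automata $\autB_{c_1},\dots,\autB_{c_m}$ of Theorem~\ref{thm:color_elim} synchronously on the same guessed run of $\autA$. Let $\autB_{c_j}=\big(Q\times[2|Q|w],\eta_j,(q_\rmI,2|Q|w),\beta_j\big)$. I would form the product automaton $\autB'$ with state set $Q\times[2|Q|w]^m$, initial state $(q_\rmI,2|Q|w,\dots,2|Q|w)$, and transition function that, at a state $(q,r_1,\dots,r_m)$, simultaneously enforces the constraints imposed by each $\eta_j(q,r_j)$ on the $Q$-component while independently tracking each rank component. Concretely, one replaces each proposition $(p,\star,a)$ occurring in $\delta(q)$ by a disjunction over all tuples $(r_1',\dots,r_m')$ such that $r_j'\preceq_q^{c_j} r_j$ for every $j$, producing successor tuples $\big((p,r_1',\dots,r_m'),\star,a\big)$; for those $j$ where additionally $\alpha(q)=c_j$ and $r_j$ is odd the whole transition collapses to $\FALSE$. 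This is exactly the product of the $\autB_{c_j}$'s over the shared $Q$-component, and its runs correspond bijectively to memoryless runs of $\autA$ equipped with a safe $c_j$-ranking for each $j$. Acceptance is a generalized B\"uchi condition with $m$ sets: branch must visit $\{(q,r_1,\dots,r_m)\mid \beta_j(q,r_j)=2\}$ infinitely often for each $j$.

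A final standard round-robin construction converts $\autB'$ into an equivalent B\"uchi automaton $\autB$ by multiplying the state set by a counter $\{1,\dots,m\}$ that advances whenever the current component's B\"uchi set is visited, with acceptance set the states where the counter resets. This introduces a multiplicative factor of $m\leq\lceil k/2\rceil\leq k$ in the state count, yielding $|Q|\cdot(2|Q|w+1)^m\cdot m\in\bigOh\big(kn(2nw+1)^{\lceil k/2\rceil}\big)$ states. For the size bound, each transition of $\autB'$ is obtained from a transition of $\autA$ by replacing every atomic proposition $(p,\star,a)$ with a disjunction ranging over at most $(2|Q|w+1)^m$ rank tuples, so transitions of $\autB'$ contribute $\bigOh\big(\size{\autA}\cdot(2nw+1)^{\lceil k/2\rceil}\big)$ distinct subformulas overall, and the round-robin layer multiplies this by $m$, giving the claimed $\bigOh\big(k\size{\autA}(2nw+1)^{\lceil k/2\rceil}\big)$.

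The main obstacles I expect are (i)~verifying that the product construction on the $Q$-component really corresponds to a single common memoryless run of $\autA$---this relies on the fact that each $\autB_{c_j}$ merely annotates states of $\autA$ by ranks, so the $Q$-projection of any run of $\autB'$ is a run of $\autA$, and conversely any memoryless run of $\autA$ together with safe rankings for all odd colors can be lifted (using the property in Lemma~\ref{lem:normalized_ranking} for each $c_j$ separately) to a run of $\autB'$; and (ii)~bookkeeping to confirm the size bounds, in particular that the rank-tuple disjunction has indeed $\bigOh((2nw+1)^{\lceil k/2\rceil})$ subformulas when counted modulo syntactic identity across all transitions.
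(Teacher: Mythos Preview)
Your proposal is correct and follows essentially the same approach as the paper: form the product of the B\"uchi automata $\autB_{c_j}$ from Theorem~\ref{thm:color_elim} over the shared $Q$-component to obtain a generalized B\"uchi automaton with states $(q,r_1,\dots,r_m)$, invoke Theorem~\ref{thm:ranking} for correctness, and then apply the standard round-robin translation to a plain B\"uchi automaton. Your write-up is in fact more explicit than the paper's---you spell out the equivalence between acceptance and the absence of $c$-dominated paths for every odd $c$, and you flag the need for Lemma~\ref{lem:normalized_ranking} in each coordinate when lifting a run of $\autA$ with safe rankings to a run of $\autB'$---so the two ``obstacles'' you anticipate are handled exactly as the paper (implicitly) handles them.
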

\begin{proof}
  Assume the odd colors of $\autA$ are $c_1,\dots,c_\ell\in\Nat$,
  for some $\ell\leq \lceil k/2\rceil$.  For $i\in\{1,\dots,\ell\}$,
  let $\autB_{c_i}$ be the B\"uchi automaton from
  Theorem~\ref{thm:color_elim}.  From the automata
  $\autB_{c_1},\dots,\autB_{c_\ell}$, we construct a so-called
  generalized B\"uchi automaton $\autC$, 
  i.e., one where the 
  acceptance condition is the finite conjunction of finitely many
  B\"uchi acceptance conditions.
  Since the transition functions
  of $\autB_{c_1},\dots,\autB_{c_\ell}$ agree on the state space of
  $\autA$, the states of $\autC$ have the form
  $(q,r_1,\dots,r_\ell)$, where $q$ is a state of $\autA$ and the
  $r_i$s are ranks of the $\autB_i$s.
  Thus, $\autC$ has $n(2nw+1)^{\ell}$ states.  An upper bound on
  $\size{\autC}$ is $\bigOh\big(\size{\autA}(nw+1) +
  n(2nw+1)^{\ell}\big)
  \subseteq\bigOh(\size{\autA}(2nw+1)^{\ell}\big)$.  With
  Theorem~\ref{thm:ranking} we conclude that $\autC$ accepts the
  language $L_{\BDAG_{\leq w}}(\autA)$. It is standard to obtain from
  $\autC$ an equivalent B\"uchi automaton $\autB$ with $\bigOh\big(kn
  (2nw+1)^{\lceil k/2\rceil}\big)$ states and
  $\size{\autB}\in\bigOh\big( k\size{\autA}(2nw+1)^{\lceil
    k/2\rceil}\big)$.
\end{proof}


\section{Collapse Results}
\label{sec:bounded_connectivity}

By consecutively applying the previously presented translations to a
\mucalc sentence, we obtain that \mucalc's alternation hierarchy over
any class only containing BDAGs of width at most $w$ collapses to its
alternation-free fragment, for a fixed $w\in\Nat$.
\begin{theorem}
  \label{thm:aflmuforwABA}
  Let $w\geq 2$ and $\bfU\subseteq\BDAG_{\leq w}$.  For every sentence
  $\phi$, there is an alternation-free sentence $\psi$ of size
  $w^{\bigOh(|\phi|\cdot \ad{\phi})}$ such that
  $L_{\bfU}(\psi)=L_{\bfU}(\phi)$. If $\phi$ is guarded, then the size
  of $\psi$ is $\big(|\phi|\cdot w\big)^{\bigOh(\ad{\phi})}$.
\end{theorem}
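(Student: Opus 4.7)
The plan is to chain together the translations established in Sections~\ref{sec:translations} and~\ref{sec:constructions} in the obvious order: sentence $\rightarrow$ guarded sentence $\rightarrow$ parity automaton $\rightarrow$ B\"uchi automaton $\rightarrow$ weak automaton $\rightarrow$ alternation-free sentence. Since all intermediate translations either preserve language on arbitrary $(2^{\propset},A)$-graphs or preserve it on $\BDAG_{\leq w}$, and since $\bfU \subseteq \BDAG_{\leq w}$, the final formula $\psi$ will satisfy $L_{\bfU}(\psi) = L_{\bfU}(\phi)$ by a straightforward composition. The real work is keeping careful track of the sizes through the chain.

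I first handle the guarded case, which gives the sharper bound. Starting from a guarded sentence $\phi$, Theorem~\ref{thm:mucalc_APA} produces a $(\propset,A)$-PA $\autA$ with $n := |\phi|$ states, $\size{\autA} \in \bigOh(|\phi|)$, and index $k \le \ad{\phi}+1$. Feeding $\autA$ into Lemma~\ref{lem:PA_BA} yields a B\"uchi automaton $\autB'$ with $\bigOh\!\big(kn(2nw+1)^{\lceil k/2\rceil}\big)$ states and size $\bigOh\!\big(k\,\size{\autA}(2nw+1)^{\lceil k/2\rceil}\big)$, which in our parameters is already bounded by $(|\phi|\cdot w)^{\bigOh(\ad{\phi})}$. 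Applying Lemma~\ref{lem:BA_WA} to $\autB'$ turns it into a weak automaton $\autW$ whose size is bounded by a polynomial in the size of $\autB'$ and $w$, so we still have $|\autW|,\size{\autW} \in (|\phi|\cdot w)^{\bigOh(\ad{\phi})}$. Finally, Theorem~\ref{thm:weakABA_altfreemucalc} turns $\autW$ into an alternation-free sentence $\psi$ of size $\bigOh(|\autW|\cdot\size{\autW})$, which is still $(|\phi|\cdot w)^{\bigOh(\ad{\phi})}$. This establishes the guarded bound.

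For the general (possibly unguarded) case, I first apply Lemma~\ref{lem:guarded} to obtain a guarded sentence $\phi'$ with $|\phi'| \le 2^{\bigOh(|\phi|)}$ and $\ad{\phi'} = \ad{\phi}$, and then run the above chain on $\phi'$. Substituting $|\phi'| = 2^{\bigOh(|\phi|)}$ into $(|\phi'|\cdot w)^{\bigOh(\ad{\phi})}$ gives $2^{\bigOh(|\phi|\cdot\ad{\phi})} \cdot w^{\bigOh(\ad{\phi})}$, and since $w \ge 2$ this quantity is dominated by $w^{\bigOh(|\phi|\cdot\ad{\phi})}$, yielding the claimed bound.

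The only nonroutine point in the argument is to make sure the exponent $\lceil k/2\rceil$ appearing in Lemma~\ref{lem:PA_BA} together with the multiplicative blow-ups contributed by Lemmas~\ref{lem:BA_WA} and \ref{lem:guarded} and Theorem~\ref{thm:weakABA_altfreemucalc} compose to at most $\bigOh(\ad{\phi})$ factors of $|\phi|\cdot w$, so that the final size stays within the announced bound; this is a straightforward but careful bookkeeping step rather than a conceptual difficulty.
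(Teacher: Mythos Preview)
Your proposal is correct and follows essentially the same route as the paper's own proof: chain Theorem~\ref{thm:mucalc_APA}, Lemma~\ref{lem:PA_BA}, Lemma~\ref{lem:BA_WA}, and Theorem~\ref{thm:weakABA_altfreemucalc} in that order while tracking the size bounds, then invoke Lemma~\ref{lem:guarded} up front for the unguarded case. Your treatment of the unguarded bound (explicitly noting that $w\ge 2$ lets $2^{\bigOh(|\phi|\cdot\ad{\phi})}\cdot w^{\bigOh(\ad{\phi})}$ be absorbed into $w^{\bigOh(|\phi|\cdot\ad{\phi})}$) is in fact slightly more detailed than the paper's, which simply remarks that guarding causes an exponential blow-up.
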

\begin{proof}
  Suppose $\phi$ is guarded and let $n:=|\phi|$ and $k:=\ad{\phi}$.
  We construct the parity automaton $\autA_{\phi}$ with
  $\size{\autA_\phi}\in\bigOh(n)$ and $\ind{\autA_\phi}=k+1$
  (Theorem~\ref{thm:mucalc_APA}).  Then, we construct from
  $\autA_{\phi}$ the B\"uchi automaton $\autB_{\phi}$ with
  $\size{\autB_{\phi}}\in(nw)^{\bigOh(k)}$
  (Lemma~\ref{lem:PA_BA}). 
  From $\autB_{\phi}$, we obtain the weak automaton $\autC_{\phi}$
  with $\size{\autC_{\phi}}\in (nw)^{\bigOh(k)}\cdot
  \big(2(nw)^{\bigOh(k)}w+1\big)\subseteq (nw)^{\bigOh(k)}$
  (Lemma~\ref{lem:BA_WA}). Finally, we construct the alternation-free
  sentence $\psi$ with $|\psi|\in(nw)^{\bigOh(k)}$
  (Theorem~\ref{thm:weakABA_altfreemucalc}).  By construction,
  $L_{\BDAG_{\leq w}}(\phi)=L_{\BDAG_{\leq w}}(\psi)$. Since
  $\bfU\subseteq \BDAG_{\leq w}$, we have that
  $L_{\bfU}(\phi)=L_{\bfU}(\psi)$.
  When $\phi$ is not guarded we first transform it into guarded form
  (Lemma~\ref{lem:guarded}), which results in an exponential blow-up.
\end{proof}

In the following, we derive from Theorem~\ref{thm:aflmuforwABA}
further classes of structures over which the alternation hierarchy of
\mucalc collapses to the alternation-free fragment.

\paragraph{Infinite Nested Words} 

Nested words~\cite{Alur2009} extend words with a hierarchical
structure.  Infinite words and infinite nested words when represented
as graphs are BDAGs of width~$1$. We omit the details; instead, see
Figure~\ref{fig:nestedwords} for illustrations, where the set of
actions is $\set{\mathrm{+1}}$ and $\set{\mathrm{+1},\mathrm{jump}}$,
respectively.
\begin{figure}[t]
  \vspace{-.3cm}
  \begin{center}
    \includegraphics[scale=.6]{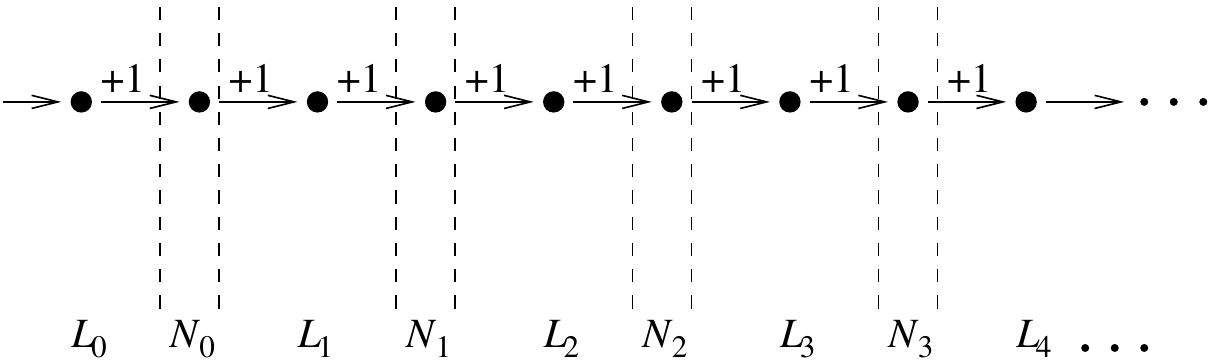}
    \quad
    \includegraphics[scale=.6]{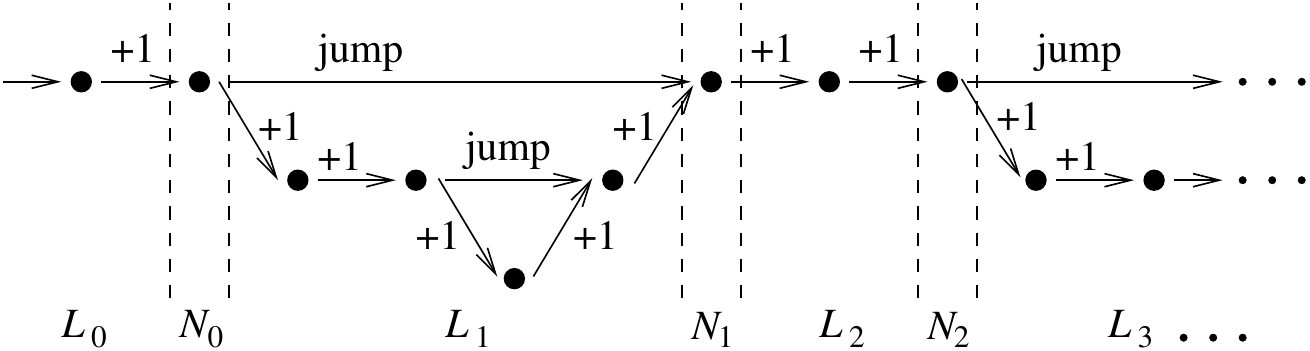}
  \end{center}
  \vspace{-.3cm}
  \caption{\small \label{fig:nestedwords}BDAG representation of
    infinite words (left) and infinite nested words (right)}
  \vspace{-.2cm}
\end{figure}

Then, by Theorem~\ref{thm:aflmuforwABA}, the \mucalc
alternation hierarchy over these structures collapses to the
alternation-free fragment. This improves prior results
in
\cite{Arenas_Barcelo_Libkin2011:nestedwords,Bozzelli2007:nestedwords}
on the expressivity of \mucalc over infinite nested words.

\paragraph{Graphs with Bounded Feedback Sets}

In the following, we consider classes of finite graphs that can be
unfolded to bisimilar BDAGs with bounded width.  The width of these
BDAGs is characterized by a minimal feedback vertex set of the
original folded graph.  
A set $F\subseteq V$ is a \emph{feedback vertex set} (FVS) of $\scrG =
\big(V,(E_a)_{a\in A},v_{\rmI},\lambda\big)$ iff the removal of the
vertices in $F$ separates $\scrG$ into a set of finite DAGs.  Finite
DAGs have the empty set as a feedback vertex set.  We say that a
finite graph $\scrG$ is \emph{$k$-DAG-decomposable} iff the minimal
cardinality of a FVS of $\scrG$ is $k \in \Nat$.
Recall that the $(\Sigma,A)$-graphs $\scrG=\big(V,(E_a)_{a\in
  A},v_{\rmI},\lambda\big)$ and $\scrG'=\big(V',(E'_a)_{a\in
  A},v_{\rmI}',\lambda'\big)$ are \emph{bisimilar} iff there is an
equivalence relation $R\subseteq V\times V'$ with the following
properties: (i)~$\lambda(v)=\lambda'(v')$, for all $(v,v')\in R$,
(ii)~$(v_{\rmI},v'_{\rmI})\in R$, (iii)~for all $u,v\in V$, $u'\in
V'$, and $a\in A$, if $(u,v)\in E_a$ and $(u,u')\in R$ then
$(u',v')\in E'_a$ and $(v,v')\in R$, for some $v'\in V'$, and (iv)~for
all $u',v'\in V'$, $u\in V$, and $a\in A$, if $(u',v')\in E'_a$ and
$(u,u')\in R$ then $(u,v)\in E_a$ and $(v,v')\in R$, for some $v\in
V$.

\newcommand{\unf}{\ensuremath{\mathtt{unf}}\xspace}
\begin{lemma}
  \label{lem:constbdag}
  For every $k$-DAG-decomposable $(\Sigma,A)$-graph $\scrG$, with
  $k\in\Nat$, there is a bisimilar BDAG $\scrD$ of width $k$.
\end{lemma}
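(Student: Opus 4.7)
The plan is to take a $k$-element feedback vertex set $F$ of $\scrG$ and build $\scrD$ as an unfolding whose $N_i$-layers synchronize once per round at the vertices of $F$. Let $F\subseteq V$ be a feedback vertex set of $\scrG$ with $|F|=k$, and let $\scrG^*$ denote the graph obtained from $\scrG$ by deleting every edge whose target lies in $F$. Since every cycle of $\scrG$ contains a vertex of $F$ and therefore an edge into $F$, the graph $\scrG^*$ is a finite DAG, so the tree-unfolding of $\scrG^*$ from any vertex is a finite tree.

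Next I would define $\scrD$ layer by layer. For each $i\geq 0$ set $N_i:=\setx{(f,i)}{f\in F}$ with $\lambda_{\scrD}\big((f,i)\big):=\lambda(f)$; this yields $|N_i|=k$. Let $L_0$ be the tree-unfolding of $\scrG^*$ from $v_{\rmI}$, and for $i\geq 1$ let $L_i$ be the disjoint union, over all triples consisting of a vertex $(f,i-1)\in N_{i-1}$, an action $a\in A$, and a vertex $v$ with $(f,v)\in E_a$, of fresh tree-unfoldings of $\scrG^*$ rooted at $v$. The edges of $\scrD$ come in three kinds: internal edges of the trees in each $L_i$ (inherited from $\scrG^*$), which produce only $L_i\to L_i$ edges; for each $x\in L_i$ labeled by the $\scrG$-vertex $u$ and each $(u,f)\in E_a$ with $f\in F$, an $a$-edge $x\to(f,i)$, giving $L_i\to N_i$ edges; and for each $(f,i)\in N_i$, $a\in A$, and $(f,v)\in E_a$, an $a$-edge from $(f,i)$ to the root of the corresponding tree-unfolding copy in $L_{i+1}$, giving $N_i\to L_{i+1}$ edges. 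The source $v_{\rmI}^{\scrD}$ is the root of $L_0$.

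This yields a BDAG of width $k$: every edge sits in one of the allowed positions, each $|N_i|=k$, and each $L_i$ is a finite disjoint union of finite trees, so it contains no infinite path. Bisimilarity is witnessed by the relation $R$ pairing every $\scrD$-vertex with the $\scrG$-vertex it was spawned from (which also coincides with its label). The forth direction follows because at a $\scrD$-node representing $u$ the non-$F$ successors of $u$ appear within the current $L$-layer via the $\scrG^*$-unfolding, the $F$-successors reached from an $L$-layer vertex appear via $L_i\to N_i$ edges, and all $\scrG$-successors of an $F$-vertex appear from $(f,i)\in N_i$ via $N_i\to L_{i+1}$ edges. The back direction is immediate because every $\scrD$-edge is built from an actual $\scrG$-edge with the same action.

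The main obstacle is the case of an edge of $\scrG$ both of whose endpoints lie in $F$. A naive construction that placed every $F$-vertex in some $N$-layer would force such an edge to be an $N_i\to N_{i+1}$ transition, which Definition~\ref{def:bottleneckedDAG} forbids. The construction above sidesteps this by working with $\scrG^*$ and by planting \emph{every} $\scrG$-successor of an $N_i$-vertex, including $F$-successors, as the root of a fresh tree-unfolding inside $L_{i+1}$; a subsequent $F$-visit then materialises as a legitimate $L_{i+1}\to N_{i+1}$ edge. Acyclicity of $\scrG^*$ guarantees that these $L$-layer trees remain well-founded even when their roots happen to carry $F$-labels.
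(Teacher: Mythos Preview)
Your construction is correct and in fact cleaner than the argument in the paper. The paper works with the full tree unfolding $\scrT$ of $\scrG$ and manufactures $\scrD$ via a partial surjection $\beta:\scrT\to\scrD$: one advances along branches of $\scrT$ from a set of seeds until a second occurrence of some $F$-vertex is met, merges those occurrences to populate $N_i$, selects one representative per $F$-vertex to seed the next layer, and iterates. Your approach bypasses the bookkeeping with seeds and representatives by first passing to $\scrG^*$ (delete all edges into $F$) and then building $\scrD$ explicitly with $N_i=F\times\{i\}$ and each $L_i$ a finite forest of $\scrG^*$-unfoldings. This makes the width bound and the well-foundedness of every $L_i$ immediate, and the explicit case split in your forth argument (non-$F$ successor versus $F$ successor; $L$-layer node versus $N$-layer node) is easier to verify than the paper's appeal to ``bisimilarity is preserved when constructing $\scrD$.'' The paper's route has the mild advantage that its $N_i$ only contains $F$-vertices actually reached in layer $i$, whereas yours inserts the whole of $F$ in every $N_i$; but since unreachable vertices are harmless for bisimilarity at the source, this costs nothing. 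Your handling of $F$-to-$F$ edges---planting every $\scrG$-successor of an $N_i$-vertex as a fresh root in $L_{i+1}$, so that an immediate $F$-successor shows up first in $L_{i+1}$ and only then hops to $N_{i+1}$---is exactly the right fix.
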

\begin{proof}
  Let $\scrG$ be a $(\Sigma,A)$-graph $\big(V,(E_a)_{a\in
    A},v_{\rmI},\lambda\big)$ with a minimal FVS $F\subseteq V$ of
  cardinality $k$.  Furthermore, let $\scrT$ be the tree unfolding of
  $\scrG$ and let $\unf$ be the relation between the vertices of
  $\scrG$ and $\scrT$ that witnesses that $\scrG$ and $\scrT$ are
  bisimilar.  The construction of a bisimilar BDAG $\scrD$ of width
  $k$ is as follows, which is done in a layer-wise manner.

  Construct $\beta$, a (partial and surjective) function between the
  vertices of $\scrT$ and $\scrD$ (i.e., from the elements of the tree
  $\scrT$ to the elements of the acyclic graph $\scrD$), as follows:
  Assume that for each layer $i$, there is a set $S_i$ of \emph{seeds}
  for such a layer. Using $\unf$ collect all vertices in $\scrT$ that
  can be reached from the vertices in $S_i$ until, in every branch,
  (i)~a vertex with no successors is reached or (ii)~two occurrences
  in the unfolding $\scrT$ of a vertex $v\in F\subseteq V$ in $\scrG$
  are found.  Such vertices belong to layer $i$ in $\scrT$. Then,
  $\beta$ maps such a subset of vertices in $\scrT$, denoted by $R_i$,
  to vertices in layer $i$ of $\scrD$ as follows---and let ${\rm
    max}(R_i)$ be the set of maximal or terminal elements in the
  forest $R_i$:
\begin{equation}\label{betamrg}
  \forall u',u''\in {\rm max}(R_i).\ \forall v\in F.\ 
  (v,u')\in\unf \text{ and } (v,u'')\in\unf \Longrightarrow
  \beta(u') = \beta(u'')
  \,.
\end{equation}
Edges in layer $i$ of $\scrD$ are edges in layer $i$ of $\scrT$ which
respect $\beta$, i.e., if $(w',w'')$ in layer $i$ of $\scrT$ via
action~$a\in A$ then $(\beta(w'),\beta(w''))$ in layer $i$ of $\scrD$
via action $a$.  As in $\scrD$ every vertex belongs to $L_i$ or $N_i$,
then the following holds:
\begin{equation}\label{betaNi}
  \forall u\in {\rm max}(R_i).\ \forall v\in F .\ 
  (v,u)\in\unf \Longrightarrow \beta(u)\in N_i
  \,,
\end{equation}
otherwise $\beta(u)\in L_i$. 
In order to define the set of seeds $S_{i+1}$ for the $(i+1)$th layer of $\scrT$, 
firstly one needs to define a subset $F'_i$ of ${\rm max}(R_i)$ whose successors in $\scrT$ 
will be the seeds $S_{i+1}$ of the layer $i$ of $\scrT$. 
The set $F'_i$ is a subset of ${\rm max}(R_i)$ that satisfies two conditions:
\begin{enumerate}
\item[(a)] if $\exists (v,u)\in\unf \text{ with } v\in F \text{ and }
  u\in {\rm max}(R_i)$ then $\exists ! u'\in F'_i \text{ such that } (v,u')\in\unf$, and
\item[(b)] if $u'\in F'_i$ then $\exists (v,u')\in\unf \text{ such
    that } v\in F
  \text{ and } u'\in {\rm max}(R_i)$.
\end{enumerate}
Condition~(a) ensures that 
$F'_i$ contains {\em no more than one} vertex in the unfolding under \unf of a vertex in $F$
as well as that there is one vertex in $F'_i$ for each vertex in ${\rm max}(R_i)$ which 
is associated under \unf with a vertex in $F$.  
Condition (b)~ensures that 
every vertex in $F'_i$ is the occurrence in the unfolding under \unf of a vertex in $F$. 
It is because of condition (a) that the function $\beta$ is partial rather than total.

Edges between vertices in consecutive layers are defined as expected:
if $(u,s)$ in $\scrT$ via action $a\in A$, with $u\in {\rm max}(R_i)$
and $s\in S_{i+1}$, then $(\beta(u),\beta(s))$ in $\scrD$ via action
$a$. The labeling function in $\scrD$ is as in $\scrT$ (and obviously
as in $\scrG$): for any vertex $\beta(u)$ in $\scrD$,
$\lambda_{\scrT}(u) = \lambda_{\scrD}(\beta(u))$.  Finally, $\scrD$ is
constructed recursively using \unf and $\beta$ by letting the set
$S_0$ be the singleton set that only contains the root of $\scrT$.

Clearly, $\scrD$ is a DAG. Within as well as between layers $\beta$
always respects the acyclic structure produced by $\unf$, even when
different occurrences of vertices in $F\subseteq V$ are unified as
they are always terminal elements of a given $R_i$ and thus edges
in $N_i \times L_{i+1}$, i.e.\ the source of edges to the next
layer.
    
Finally, $\scrG$ and $\scrD$ are bisimilar because either (i) a vertex
in $\scrD$ is obtained by a tree unfolding, and every graph is
bisimilar to its own tree unfolding or (ii) a vertex in $\scrD$ is
obtained by unifying occurrences of the same vertex in $\scrG$, and of
course every vertex of a graph is bisimilar to itself. Then,
bisimilarity is preserved when constructing $\scrD$.  To see that
$\scrD$ is a BDAG of width at most $k$ observe that
$\sup\Setx{|N_i|}{i\in\Nat}\leq k$ because every $N_i$ defined by rule
(\ref{betaNi}) cannot contain more than one occurrence of a vertex in
$F$ due to rule (\ref{betamrg}). Then, in fact, there must exist some
$i\in\Nat$ such that $|F|=|N_i|$ since $F$ is minimal. 
\end{proof}

Then, we obtain the following result.

\begin{theorem}
  \label{thm:aflmuforkgraphs}
  Let $k\in\Nat$ and $\bfU$ be a class of $k$-DAG-decomposable
  $(2^{\propset},A)$-graphs. For every sentence $\phi$, there is an
  alternation-free sentence $\psi$ such that
  $L_{\bfU}(\psi)=L_{\bfU}(\phi)$.
\end{theorem}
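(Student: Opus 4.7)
The plan is to reduce Theorem~\ref{thm:aflmuforkgraphs} to Theorem~\ref{thm:aflmuforwABA} by passing through the bisimilar BDAG decomposition produced by Lemma~\ref{lem:constbdag}, and then exploiting the well-known bisimulation invariance of \mucalc.

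First, I would set $w := \max(k,2)$ (so that the side condition $w\geq 2$ of Theorem~\ref{thm:aflmuforwABA} is met, even when $k\in\{0,1\}$) and define the auxiliary class
\begin{equation*}
  \bfU' \;:=\; \Setx{\scrD\in\BDAG_{\leq w}}{\scrD\text{ is bisimilar to some }\scrG\in\bfU}\,.
\end{equation*}
By Lemma~\ref{lem:constbdag}, each $\scrG\in\bfU$ is bisimilar to some BDAG $\scrD$ of width at most $k\leq w$, so the assignment $\scrG\mapsto\scrD$ witnesses that for every $\scrG\in\bfU$ there is at least one bisimilar representative in $\bfU'$. Since $\bfU'\subseteq\BDAG_{\leq w}$, Theorem~\ref{thm:aflmuforwABA} applied to $\phi$ yields an alternation-free sentence $\psi$ with $L_{\bfU'}(\psi)=L_{\bfU'}(\phi)$.

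The remaining step is to transport this equivalence from $\bfU'$ back to $\bfU$. For this I invoke the bisimulation invariance of \mucalc: for bisimilar pointed $(2^{\propset},A)$-graphs $\scrG$ and $\scrD$ and any sentence $\chi$, one has $\scrG\in L(\chi)$ iff $\scrD\in L(\chi)$. This is a standard fact; it follows by a routine induction on $\chi$ where the fixpoint cases are handled by transferring post-fixpoints and pre-fixpoints along the bisimulation, and is the reason the logic cannot distinguish $\scrG$ from its unfolding-then-merge construction in Lemma~\ref{lem:constbdag}. Applying it to both $\phi$ and $\psi$, for each $\scrG\in\bfU$ and its bisimilar $\scrD\in\bfU'$ we obtain the chain
\begin{equation*}
  \scrG\in L_{\bfU}(\phi) \;\Longleftrightarrow\; \scrD\in L_{\bfU'}(\phi) \;\Longleftrightarrow\; \scrD\in L_{\bfU'}(\psi) \;\Longleftrightarrow\; \scrG\in L_{\bfU}(\psi)\,,
\end{equation*}
which gives $L_{\bfU}(\phi)=L_{\bfU}(\psi)$ as required.

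There is no real obstacle beyond bookkeeping: the boundary cases $k\in\{0,1\}$ are absorbed by padding $w$ up to $2$, and the only nontrivial external ingredient is the bisimulation invariance of \mucalc, which is standard and is exactly what makes the unfolding-and-merging of Lemma~\ref{lem:constbdag} semantically harmless. Note that, unlike the guarded case of Theorem~\ref{thm:aflmuforwABA}, the resulting $\psi$ here does not come with a uniform size bound in $|\phi|$ and $k$, since the width guaranteed by Lemma~\ref{lem:constbdag} depends on the individual graphs in $\bfU$ only through the common bound $k$; this is why the theorem statement only asserts existence of $\psi$ without quantitative control.
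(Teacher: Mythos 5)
Your proposal is correct and follows essentially the same route as the paper: unfold each graph in $\bfU$ into a bisimilar BDAG of width $k$ via Lemma~\ref{lem:constbdag}, apply Theorem~\ref{thm:aflmuforwABA} to the resulting class, and transport the result back through the bisimulation invariance of \mucalc. You merely make explicit two points the paper leaves implicit (the padding $w:=\max(k,2)$ for the side condition $w\geq 2$, and the invocation of bisimulation invariance), which is fine.
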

\begin{proof}
  It follows from Lemma~\ref{lem:constbdag} that each graph in $\bfU$
  can be unfolded into a BDAG of width $k$. We then apply Theorem
  \ref{thm:aflmuforwABA} to this obtained class of unfolded BDAGs.
\end{proof}

Since collapse results carry over to smaller classes of structures,
Theorem~\ref{thm:aflmuforkgraphs} implies the collapse of the
alternation hierarchy over the smaller class of undirected
$k$-DAG-decomposable graphs. 

Finally, we consider classes of graphs that can be decomposed by
removing a bounded number of edges.  Let $\scrG = \big(V,(E_a)_{a\in
  A},v_{\rmI},\lambda\big)$ be a $(\Sigma,A)$-graph.  A set
$F\subseteq \bigcup_{a\in A}E_a$ is a \emph{feedback edge set} (FES)
of $\scrG$ iff the removal of the edges in $F$ separates $\scrG$ into
a set of finite DAGs.
Since every graph with FES $F$ has also a FVS of cardinality at most
$|F|$, we obtain the following corollary.
\begin{corollary}
  Let $k\in\Nat$ and $\bfU$ be a class of finite
  $(2^{\propset},A)$-graphs with minimal FESs of size $k$. For every
  sentence $\phi$, there is an alternation-free sentence $\psi$ such
  that $L_{\bfU}(\psi)=L_{\bfU}(\phi)$.
\end{corollary}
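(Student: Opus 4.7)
The plan is to reduce this corollary directly to Theorem~\ref{thm:aflmuforkgraphs} via the observation, already announced before the statement, that a bounded FES yields a bounded FVS. Concretely, it suffices to show that every $(\Sigma,A)$-graph admitting a FES of size at most $k$ is $k'$-DAG-decomposable for some $k' \le k$. Once this is established, each graph in $\bfU$ falls into the hypothesis of Theorem~\ref{thm:aflmuforkgraphs}, and the existence of an equivalent alternation-free sentence $\psi$ with $L_{\bfU}(\psi) = L_{\bfU}(\phi)$ follows immediately.

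To carry out the reduction, I would proceed as follows. Let $\scrG = \big(V,(E_a)_{a\in A},v_\rmI,\lambda\big)$ be a graph in $\bfU$, and let $F \subseteq \bigcup_{a\in A} E_a$ be a FES with $|F|\le k$. For every edge $e \in F$, pick one of its two endpoints (say the source) and collect the resulting vertices into a set $V_F \subseteq V$, so that $|V_F| \le |F| \le k$. Removing the vertices in $V_F$ from $\scrG$ deletes every edge incident to $V_F$, which in particular deletes every edge in $F$. Hence the graph resulting from removing $V_F$ is a subgraph of the graph obtained from $\scrG$ by removing only the edges in $F$, and the latter is by assumption a disjoint union of finite DAGs. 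Since any subgraph of a finite DAG is again a finite DAG, $V_F$ is a FVS of $\scrG$ of cardinality at most $k$. So $\scrG$ is $k'$-DAG-decomposable for some $k'\le k$.

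Therefore the class $\bfU$ is contained in the class of $k'$-DAG-decomposable $(2^\propset,A)$-graphs for some $k' \le k$ (taking, if desired, $k'$ equal to the minimum FVS size of each graph). Applying Theorem~\ref{thm:aflmuforkgraphs} to this enlarged class and restricting back to $\bfU$ yields, for every sentence $\phi$, an alternation-free sentence $\psi$ satisfying $L_{\bfU}(\psi) = L_{\bfU}(\phi)$, which is exactly the statement. There is no real obstacle here: the only subtlety is the passage from edges to vertices, which is handled by the endpoint-selection argument above, and the monotonicity of the collapse result when passing to a smaller class of structures, which is immediate from the definition of $L_{\bfU}$.
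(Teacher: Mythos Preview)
Your proposal is correct and matches the paper's approach exactly: the paper merely observes, just before the corollary, that any graph with a FES $F$ also has a FVS of cardinality at most $|F|$, and your endpoint-selection argument makes that observation explicit. The one imprecision---claiming a single $k'\le k$ for the whole class $\bfU$ when different graphs may have different minimal FVS sizes---is harmless, since Theorem~\ref{thm:aflmuforkgraphs} (via Theorem~\ref{thm:aflmuforwABA}) ultimately only needs each graph to unfold into $\BDAG_{\le k}$.
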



\section{Conclusion and Future Work}
\label{sec:conclusion}

The results in this paper focus on \mucalc's expressivity.  By
generalizing and utilizing automata-theoretic methods, we have
unified, generalized, and strengthened prior collapse results of
\mucalc's alternation hierarchy, namely, the results on finite acyclic
directed graphs~\cite{Mateescu:2002:LMC}, infinite
words~\cite{Kaivola1995:multl}, and infinite nested
words~\cite{Arenas_Barcelo_Libkin2011:nestedwords}.
Future work includes to investigate whether our automata construction
for eliminating odd colors in parity automata can be generalized and
to explore over which other classes of structures such generalizations
apply.
The ultimate goal is to characterize the classes of graphs over which
the alternation-free fragment has already the same expressivity as the
full $\mu$-calculus.

We mainly ignore complexity issues in this paper, except the
established upper bounds on the sizes of the resulting
alternation-free formulas.  It remains as future work to provide lower
bounds and to investigate the computational complexity of the
satisfiability problem for \mucalc with respect to classes of
structures over which its alternation hierarchy collapses.

\vspace{-.2cm}
\paragraph{Acknowledgments}

The authors thank Christian Dax for initial discussions on the topic
of this paper and Julian Bradfield for advice on the alternation
hierarchy.
Julian Gutierrez acknowledges with gratitude the support of EPSRC
grant `Solving Parity Games and Mu-Calculi' and ERC Advanced grant
ECSYM.


\bibliographystyle{eptcs}
\bibliography{refs}

\end{document}